\newcommand{\lbr}{\llbracket}
\newcommand{\rbr}{\rrbracket}
\newcommand{\gvd}{\Gamma \vdash}
\newcommand{\teq}{\triangleq}
\newcommand{\set}[1]{\{#1\}}                    % Set (as in \set{1,2,3}).
\newcommand{\setof}[2]{\{{#1}\mid{#2}\}}        % Set (as in \setof{x}{x>0}).
\newcommand{\join}{\bowtie}
\newcommand{\semijoin}{\ltimes}
\newcommand{\Integer}{\mathds{Z}}
\newcommand{\Bool}{\mathds{B}}
\newcommand{\true}{\texttt{true}}
\newcommand{\String}{String}
\newcommand{\Query}{Query}
\newcommand{\Pred}{Predicate}
\newcommand{\Expr}{Expression}
\newcommand{\Proj}{Projection}
\newcommand{\lambdaFn}[2]{\lambda \; {#1} \;.\;{#2}}
\renewcommand{\mathtt}[1]{{\textrm {\tt {#1}}}}
\newcommand{\D}[1]{\lbr{#1}\rbr}
\newcommand{\N}{\mathds{N}}
\newcommand{\denote}[1]{\lbr #1 \rbr}
\newcommand{\sep}{~|~}
\newcommand{\Context}{\Gamma}
\newcommand{\context}{g}
\newcommand{\Schema}{\texttt{Schema}}
\newcommand{\schema}{\sigma}
\newcommand{\tuple}{t}
\newcommand{\query}{q}
\newcommand{\proj}{p}
\newcommand{\pred}{b}
\newcommand{\expr}{e}
\newcommand{\nomial}{n}
\newcommand{\zero}{\textbf{0}}
\newcommand{\one}{\textbf{1}}
\newcommand{\leaf}[1]{\texttt{leaf} \; #1}
\newcommand{\node}[2]{\texttt{node} \; #1 \; #2}
\newcommand{\emptySchema}{\texttt{empty}}
\newcommand{\denoteQuery}[3]{\denote{#1 \vdash #2 : #3}}
\newcommand{\denoteExpr}[3]{\denote{#1 \vdash #2 : #3}}
\newcommand{\denoteProj}[3]{\denote{#1 : #2 \Rightarrow #3}}
\newcommand{\denotePred}[2]{\denote{#1 \vdash #2}}
\newcommand{\denoteTable}[1]{\denote{#1}}
\newcommand{\intro}{\lambda \; \context. \;}
\newcommand{\introt}{\lambda \; \tuple. \;}
\newcommand{\intros}{\lambda \; \context \; \tuple. \;}
\newcommand{\callQuery}[3]{\denoteQuery{#1}{#2}{#3} \; \context \; \tuple}
\newcommand{\callExpr}[3]{\denoteExpr{#1}{#2}{#3} \; \context}
\newcommand{\callPred}[2]{\denotePred{#1}{#2} \; \context}
\newcommand{\Tuple}{\texttt{Tuple}}
\newcommand{\Type}{\mathcal{U}}
\newcommand{\type}{\tau}
\newcommand{\Pair}[2]{\node{#1}{#2}}
\newcommand{\mkPair}[2]{(#1,#2)}
\newcommand{\fst}[1]{#1.1}
\newcommand{\snd}[1]{#1.2}
\newcommand{\Unit}{\texttt{Unit}}
\newcommand{\mkUnit}{\texttt{unit}}
\newcommand{\merely}[1]{\left\lVert#1\right\rVert}
\newcommand{\negate}[1]{#1 \rightarrow \zero}
\newcommand{\SELECT}[2]{\texttt{SELECT} \; #1 \; #2}
\newcommand{\WHERE}[2]{#1 \; \texttt{WHERE} \; #2}
\newcommand{\FROM}[1]{\texttt{FROM} \; #1}
\newcommand{\UNIONALL}[2]{#1 \; \texttt{UNION ALL} \; #2}
\newcommand{\EXCEPT}[2]{#1 \; \texttt{EXCEPT} \; #2}
\newcommand{\AND}[2]{#1 \; \texttt{AND} \; #2}
\newcommand{\OR}[2]{#1 \; \texttt{OR} \; #2}
\newcommand{\CastPred}[2]{\texttt{CASTPRED} \; #1 \; #2}
\newcommand{\CastExpr}[2]{\texttt{CASTEXPR} \; #1 \; #2}
\newcommand{\Compose}[2]{ #1 . \; #2}
\newcommand{\Duplicate}[2]{#1 , \; #2}
\newcommand{\DISTINCT}[1]{\texttt{DISTINCT} \; #1}
\newcommand{\NOT}[1]{\texttt{NOT} \; #1}
\newcommand{\EXISTS}[1]{\texttt{EXISTS} \; #1}
\newcommand{\Evaluate}[1]{\texttt{E2P}~#1}
\newcommand{\TRUE}{\texttt{TRUE}}
\newcommand{\FALSE}{\texttt{FALSE}}
\newcommand{\ProjL}{\texttt{Left}}
\newcommand{\ProjR}{\texttt{Right}}
\newcommand{\Empty}{\texttt{Empty}}
\newcommand{\Star}{\texttt{*}}
\newcommand{\Var}[1]{\texttt{P2E}~#1}
\newcommand{\SelectFrom}[2]{\SELECT{#1}{\FROM{#2}}}
\newcommand{\SelectFromWhere}[3]{\SelectFrom{#1}{\WHERE{#2}{#3}}}
\newcommand{\AS}{\; \texttt{AS} \;}
\newtheorem{theorem}{Theorem}[section]
\newtheorem{lemma}[theorem]{Lemma}
\newtheorem{definition}[theorem]{Definition}
\newcommand{\sqlString}{\texttt{string}}
\newcommand{\sqlBool}{\texttt{bool}}
\newcommand{\sqlInt}{\texttt{int}}
\newcommand{\figlabel}[1]{\label{fig:#1}}
\newcommand{\seclabel}[1]{\label{sec:#1}}
\newcommand{\secref}[1]{Sec.~\ref{sec:#1}}  % for use in text
\newcommand{\figref}[1]{Figure~\ref{fig:#1}}     % for use in text
\newcommand\sys{\textsc{DopCert}\xspace}
\newcommand\sem{\textsl{HoTT}\textsc{SQL}\xspace}
\newcommand\inputLang{\sem}
\newcommand\outputLang{\textsc{UniNomial}\xspace}
\begin{document}

\setlength{\pdfpageheight}{\paperheight}
\setlength{\pdfpagewidth}{\paperwidth}

\conferenceinfo{CONF 'yy}{Month d--d, 20yy, City, ST, Country}
\copyrightyear{20yy}
\copyrightdata{978-1-nnnn-nnnn-n/yy/mm}
\copyrightdoi{nnnnnnn.nnnnnnn}

% Uncomment the publication rights you want to use.
%\publicationrights{transferred}
%\publicationrights{licensed}     % this is the default
%\publicationrights{author-pays}

\titlebanner{}        % These are ignored unless
\preprintfooter{}   % 'preprint' option specified.

\title{\sem: Proving Query Rewrites with Univalent SQL Semantics}

\renewcommand{\thefootnote}{\fnsymbol{footnote}}

\authorinfo{Shumo Chu\footnotemark \and Konstantin Weitz\footnotemark[1] \and Alvin Cheung \and Dan Suciu}
           {University of Washington, USA} 
           {\{chushumo, weitzkon, akcheung, suciu\}@cs.washington.edu \vspace{-1em} }

\maketitle
\footnotetext[1]{These authors contributed equally to this work.}

\renewcommand{\thefootnote}{\arabic{footnote}}
\setcounter{footnote}{0}

%%% Local Variables:
%%% mode: latex
%%% TeX-master: "paper"
%%% End:

\begin{abstract}

Every database system contains a query optimizer that performs query rewrites.
Unfortunately, developing query optimizers remains a highly challenging task.
Part of the challenges comes from the intricacies and rich features of query 
languages, which makes reasoning about rewrite rules difficult.  
In this paper, we propose a machine-checkable denotational semantics for SQL,
the de facto language for relational database,  for 
rigorously validating rewrite rules. 
Unlike previously proposed semantics that are either non-mechanized or only
cover a small amount of SQL language features, our semantics
covers all major features of SQL, including bags, correlated subqueries,
aggregation, and indexes.  Our mechanized semantics, called \sem, 
is based on K-Relations
and homotopy type theory, where we denote relations as
mathematical functions from tuples to univalent types. 
We have implemented \sem in Coq, which takes only
fewer than 300 lines of code 
and have proved a wide range of SQL rewrite rules,
including those from
database research literature (e.g., magic set
rewrites) and real-world query optimizers (e.g., subquery elimination).
Several of these rewrite rules have never been
previously proven correct. 
In addition, while query equivalence is generally undecidable, we have 
implemented an automated decision procedure 
using \sem for conjunctive queries: a
well-studied decidable fragment of SQL that encompasses many 
real-world queries.

\end{abstract}

%\category{CR-number}{subcategory}{third-level}

% general terms are not compulsory anymore,
% you may leave them out
%\terms
%term1, term2

%\keywords
%keyword1, keyword2

\section{Introduction}
\seclabel{intro}

From purchasing plane tickets to browsing social networking websites,
we interact with database systems on a daily basis. 
Every database system consists of a query optimizer that
takes in an input query and 
determines the best program, also called a query plan, to execute 
in order to retrieve the 
desired data. Query optimizers typically consists of two components:
a query plan enumerator that generates query plans that are semantically
equivalent to the input query, and a plan selector that chooses 
the optimal plan from the enumerated ones to execute
based on a cost model.

The key idea behind plan enumeration is to apply {\em rewrite rules}
that transform a given query plan into another one, hopefully one with
a lower cost.  While numerous plan rewrite rules have been proposed
and implemented, unfortunately designing such rules remains a highly
challenging task.  For one, rewrite rules need to be {\em semantically
  preserving}, i.e., if a rule transforms query plan $Q$ into $Q'$,
then the results (i.e., the relation) returned from executing $Q$ must
be the same as those returned from $Q'$, and this has to hold for {\em
  all} possible input database schemas and instances.  Obviously,
establishing such proof for any non-trivial query rewrite rule is not
an easy task.

Coupled with that, the rich language constructs and subtle semantics of
SQL, the de facto programming language used to interact with relational 
database systems, 
only makes the task even more difficult.
As a result, while various rewrite rules
have been proposed and studied extensively 
in the data management research community~\cite{MumickFPR90SIGMOD, Muralikrishna92VLDB, LevyMS94VLDB, SeshadriHPLRSSS96SIGMOD},
to the best of our knowledge only the trivial ones have been 
formally proven to be semantically preserving. This has unfortunately 
led to dire consequences as incorrect query results have been returned 
from widely-used database systems due to unsound rewrite rules, and 
such bugs can often go undetected for extended periods of time
\cite{GanskiW87SIGMOD, MySQLBug, PostgresBug}.

In this paper we describe a system to formally verify the equivalence
of two SQL expressions.  We demonstrate its utility by proving correct
a large number of query rewrite rules that have been described in the
literature and are currently used in popular database systems. We also show
that, given counter examples, common mistakes made in query optimization fail to pass our
formal verification, as they should. Our system shares similar 
high-level goals of building formally verified systems using theorem
provers and proof assistants as recent work has demonstrated
\cite{Leroy09JACM, SEL4, FSCQ}.

The biggest challenge in designing a formal verification system for
equivalence of SQL queries is choosing the right SQL semantics.  Among
the various features of SQL, the language uses both set and bag
semantics and switches freely between them, making semantics
definition of the language a difficult task.  Although SQL is an ANSI
standard~\cite{sql2011}, the ``formal'' semantics defined there is of
little use for formal verification: it is loosely described in English
and has resulted in conflicting interpretations \cite{Date89AW}.  
Researchers have defined two quite different rigorous semantics
of SQL.  The first comes from the formal methods 
community~\cite{MalechaMSW10POPL, VeanesGHT09ICFEM, VeanesTH10LAPR16},
where SQL relations are interpreted as lists, and SQL queries as
functions over lists; two queries are equivalent if they return lists
that are equal up to element permutation (under bag semantics) or up
to element permutation and duplicate elimination (under set
semantics).  The problem with this semantics is that even the simplest
equivalences require lengthy proofs in order to reason about
order-independence or duplicate elimination, and these proofs become
huge when applied to rewrites found in real-world optimizations.
The second semantics comes from the database theory community and uses
only set semantics \cite{TheAliceBook, BunemanLSTW94SIGMOD,
  NegriPS91TODS}.  This line of work has led to theoretical results
characterizing languages for which query equivalence is decidable (and
often fully characterizing the complexity of the equivalence problem),
and separating them from richer languages where equivalence is
undecidable~\cite{ChandraM77STOC,SagivY80JACM,DBLP:conf/icdt/Ullman97,DBLP:conf/icdt/GeckKNS16}.
For example, equivalence is decidable (and $\Pi^P_2$-complete for a
fixed database schema~\cite{DBLP:conf/icdt/Ullman97}, and
coNEXPTIME-complete in general~\cite{DBLP:conf/icdt/GeckKNS16}) for
conjunctive queries with safe negation, but undecidable for
conjunctive queries with unsafe negation.  Unfortunately, this
approach is of limited use in practice, because most query
optimization rules use features that places them in the undecidable
language fragments.

This paper contributes a new semantics for SQL that is both simple
and allows simple proofs of query equivalence. We then demonstrates its
effectiveness by proving the correctness of various powerful query
optimization rules described in the literature. 

Our semantics consists of two non-trivial generalizations of
$K$-relations.  $K$-relations were introduced by Green et al. in the
database theory community~\cite{GreenKT07PODS}, and represent a
relation as mathematical function that takes as input a tuple and
returns its multiplicity in the relation, with 0 meaning that the
tuple does not exist in the relation.  A $K$-relation is required to
have finite support, meaning that only a finite set of tuples have
multiplicity $>0$.  $K$-relations greatly simplify reasoning about
SQL: under set semantics, a relation is simply a function that returns
0 or 1 (i.e., a Boolean value), while under bag semantics it returns a
natural number (i.e., a tuple's multiplicity).  
Database operations such as join or union become
arithmetic operations on the corresponding multiplicities: join
becomes multiplication, union becomes addition.  Determining the
equivalence of a rewrite rule that transforms a query $Q$ into another
query $Q'$ reduces to checking the equivalence of the functions they
denote. For example, proving that the join operation is associative
reduces to proving that multiplication is associative.  As we will
show, reasoning about functions over cardinals is much easier than
writing inductive proofs on data structures such as lists.

However, $K$-relations as defined by~\cite{GreenKT07PODS} are
difficult to use in proof assistants, because one needs to prove for
every SQL expression under consideration that the $K$-relation it
returns has finite support: this is easy with pen-and-paper, but very
hard to encode for a proof assistant. Without a guarantee of finite
support, some operations are undefined, for example projection on an
attribute requires infinite summation.  Our first generalization of
$K$-relations is to drop the finite support requirement, and meanwhile
allow the multiplicity of a tuple to be any cardinal number as opposed
to a (finite) natural number.  Then the possibly infinite sum
corresponding to a projection is well defined.  With this change, SQL
queries are interpreted over finite and infinite bags, where some
tuples may have infinite multiplicities.  To the best of our
knowledge, ours is the first SQL semantics that interprets relations
as both finite and infinite; we discuss some implications in
Sec.~\ref{sec:discussion}.

Our  second generalization of $K$-relations is to replace cardinal
numbers with univalent types.  Homotopy Type Theory
(HoTT)~\cite{hottBook} has been introduced recently as a
generalization of classical type theory by adding membership and
equality proofs.  A {\em univalent type} is a cardinal number (finite
of infinite) together with the ability to prove equality.

To summarize, we define a SQL semantics where a relation is
interpreted as a function mapping each tuple to a univalent type,
whose cardinality represents the multiplicity of the tuple in the
relation, and a SQL query is interpreted as a function from input
relations to an output relation.  We call the SQL language with this
particular semantics \sem.  Our language covers all major features of
SQL.  In addition, since univalent types have been integrated into the
Coq proof assistant, we leverage that implementation to prove
equivalences of SQL expressions.

To demonstrate the effectiveness of \sem, we implemented a new system
called \sys (Database OPtimizations CERTified) for proving equivalence
of SQL rewrite rules.  We have used \sys to prove many well-known
and commonly-used rewrite rules from the data management research
literature, many of which have never been formally proven correct
before:
aggregates~\cite{ChaudhuriD97SIGMOD,DBLP:conf/pods/KhamisNR16}, magic
sets rewriting~\cite{BancilhonMSU86PODS}, query rewriting using
indexes~\cite{TsatalosSI94}, and equivalence of conjunctive
queries~\cite{TheAliceBook}.  All our proofs require at most a few
dozens lines of Coq code using \sys, as shown in Fig.~\ref{fig:rules}.  All
definitions and proofs presented in this paper are open-source and 
available online.\footnote{\url{http://dopcert.cs.washington.edu}} 

In summary, this paper makes the following contributions:

\begin{itemize}
\item We present \sem, a (large fragment) of SQL whose semantics
  generalizes $K$-relations to infinite relations and univalent types.
  The goal of this semantics is to enable easy proofs for the
  equivalence of query rewrite rules. (Sec.~\ref{sec:semantics}.)
\item We prove a wide variety of well-known and widely-used SQL
  rewrite rules, where many of them have not be formally proven
  before; each proof require at most a few dozens lines of Coq code
  using \sys. (Sec.~\ref{sec:denotation}.)
\item We implement \sys, a new system written in Coq for checking the
  equivalence of SQL rewrite rules. \sys comes with a number of
  heuristic tactics for deciding the equivalence of arbitrary rewrite
  rules, and a fully automated procedure for deciding rewrite rules
  involving conjunctive queries, where conjunctive queries represent a
  fragment of SQL where equivalence is
  decidable. (Sec.~\ref{sec:rules}.)
\end{itemize}

The rest of this paper is organized as follows. In~\secref{overview},
we given an overview and motivation for a new semantics for SQL.  We
then introduce \sem in~\secref{semantics}, its semantics in
\secref{denotation}, and demonstrate our results in~\secref{rules}.
Related work is presented in \secref{related}.  We include some
discussion in \secref{discussion} and conclude in \secref{conclusion}.

%%% Local Variables:
%%% mode: latex
%%% TeX-master: "paper"
%%% End:

\section{Overview}
\label{sec:overview}

\begin{figure}
\begin{small}
\textbf{Rewrite Rule}:
\begin{flalign*}
\quad\; & \SelectFromWhere{\Star}{(\UNIONALL{R}{S})}{\pred} \quad \equiv & \\
        & \UNIONALL{(\SelectFromWhere{\Star}{R}{\pred})}{(\SelectFromWhere{\Star}{S}{\pred})} &
\end{flalign*}

\textbf{\sem Denotation}:
\begin{flalign*}
\Rightarrow & \lambdaFn{t}{(\denote{R} \; t + \denote{S} \; t) \times \denote{\pred} \; t}
\equiv
\lambdaFn{t}{\denote{R} \; t \times \denote{\pred} \; t + \denote{S} \; t \times \denote{\pred} \; t} &
\end{flalign*}

\textbf{\sem Proof}:
Apply distributivity of $\times$ over $+$.
\end{small}
\caption{Proving a rewrite rule using \sem.
Recall that \texttt{UNION ALL} means bag-union in SQL, which in \sem
is translated to addition of tuple multiplicities in the two input relations.
}
\label{fig:union-slct}
\end{figure}

\paragraph{SQL} The basic datatype in SQL is a {\em relation}, which has a
{\em schema} (a relation name $R$ plus attribute names $\sigma$), and
an {\em instance} (a bag of tuples).  A SQL query maps one or more
input relations to a (nameless) output relation.  For example, if a
relation with schema $R(a,b)$ has instance
$\set{(1,40),(2,40),(2,50)}$ then the SQL query
\begin{flalign*}
  \texttt{Q1:} & \SelectFrom{a}{R}
\end{flalign*}
returns the bag $\set{1,2,2}$.  

SQL freely mixes set and bag semantics, where a set is simply a bag
without duplicates and uses the \texttt{distinct} keyword to remove
duplicates. For example, the query:
\begin{flalign*}
\texttt{Q2:} & \SelectFrom{\texttt{DISTINCT } a}{R}
\end{flalign*}
returns the set $\set{1,2}$.  

\paragraph{List Semantics} Previous approaches to mechanizing formal proofs
of SQL query equivalences represent bags as
list~\cite{MalechaMSW10POPL, VeanesGHT09ICFEM, VeanesTH10LAPR16}.
Every SQL query admits a natural interpretation over lists, using a
recursive definition~\cite{DBLP:journals/tcs/BunemanNTW95}.  To prove
that two queries are equivalent, one uses their inductive definition
on lists, and proves that the two results are equal up to element
reordering and duplicate elimination (for set semantics).

The main challenges in this approach are coming up with the induction
hypothesis, and dealing with list equivalence under permutation and
duplicate elimination.
Inductive proofs quickly grow in complexity, even for simple query
equivalences.  Consider the following query:
\begin{flalign*}
  \texttt{Q3:} & \SelectFromWhere{\texttt{DISTINCT } x.a}{R \AS x, R \AS y}{x.a=y.a}
\end{flalign*}
\texttt{Q3} is equivalent to \texttt{Q2}, because it performs a
redundant self-join: the inductive proof of their equivalence is quite
complex, and has, to the best of our knowledge, not been done formally
before. A much simpler rewrite rule, the commutativity of selection,
requires 65 lines of Coq proof in prior work~\cite{MalechaMSW10POPL}, and only 10
lines of Coq proof in our semantics.
Powerful database query optimizations, such as magic sets rewrites 
and conjunctive query equivalences, are based on generalizations of
redundant self-joins elimination like $\texttt{Q2}\equiv\texttt{Q3}$,
but significantly more complex (see Sec.~\ref{sec:rules}), and
inductive proofs become impractical.
This motivated us to consider a different semantics; we do not use
list semantics in this paper.

\paragraph{$K$-Relation SQL Semantics} An alternative approach introduced
  in~\cite{GreenKT07PODS} is to represent relations as functions that
  map every tuple to a value that indicates how many times it appears
  in the relation.   If the relation is a bag, then the function
  returns a natural number, and if it is a set then it returns a value
  in $\set{0,1}$.  
  More generally, a commutative semi-ring is a structure
  ${\bf K} = (K, +, \times, 0, 1)$ where both $(K,+,0)$ and
  $(K,\times,1)$ are commutative monoids, and $\times$ distributes
  over $+$.  For a fixed set of attributes $\sigma$, denote
  $\Tuple(\sigma)$ the type of tuples with attributes $\sigma$.  A
  $K$-relation~\cite{GreenKT07PODS} is a function:
\begin{align*}
  \denote{R}: & \Tuple \; \sigma  \rightarrow K
\end{align*}
with finite support, meaning that the set
$\setof{t}{\denote{R}\; t \neq 0}$ is finite.  A bag is an
$\N$-relation, and a set is a $\Bool$-relation.  All relational
operators are expressed in terms of the semi-ring operations, for
example:
\begin{align*}
  & \denote{\UNIONALL{R}{S}} =  \lambdaFn{t}{\denote{R} \; t+\denote{S} \; t} \\ 
  & \denote{\SelectFrom{\Star}{R,S}} =  \lambdaFn{(t_1,t_2)}{\denote{R} \; t_1 \times \denote{S} \; t_2} \\
  & \denote{\SelectFromWhere{\Star}{R}{\pred}} = \lambdaFn{t}{\denote{R} \; t \times \denote{\pred} \; t}\\
  & \denote{\SelectFrom{x.a}{R}} =  \lambdaFn{t}{\sum_{t' \in \Tuple \; \sigma}(t = \D{a} \; t') \times \denote{R} \; t'}\\
  & \denote{\SelectFrom{\DISTINCT \Star}{R}} =  \lambdaFn{t}{\merely{\denote{R} \; t}} \\
\end{align*}
where, for any predicate $\pred$: $\denote{\pred} \; t = 1$ if the
predicate holds on $t$, and $\denote{\pred} \; t = 0$ otherwise.  The
function $\merely{\ }$ is defined as $\merely{x}=0$ when $x=0$, and
$\merely{x}= 1$ otherwise (see Subsec.~\ref{sec:target-lang}).  The
projection $\D{a} \; t'$ returns the attribute $a$ of the tuple $t'$,
while equality $(x=y)$ is interpreted as 0 when $x\neq y$ and 1
otherwise.

To prove that two SQL queries are equal one has to prove that two
semi-ring expressions are equal. For example,
Fig.~\ref{fig:union-slct} shows how we can prove that selections
distribute over unions, by reducing it to the distributivity of
$\times$ over $+$, while Fig.~\ref{fig:magic-distinct} shows the proof
of the equivalence for $\texttt{Q2} \equiv \texttt{Q3}$.

\begin{figure}
\begin{small}
\textbf{Rewrite Rule}:
\begin{flalign*}
\quad\;
        & \SelectFromWhere{\texttt{DISTINCT } x.a}{R \AS x, R \AS y}{x.a=y.a}  \quad \equiv &\\
        & \SelectFrom{\texttt{DISTINCT } a}{R}&
\end{flalign*}

\textbf{Equational \sem Proof}:

\begin{flalign*}
\Rightarrow & \lambdaFn{t}{\merely{\sum_{t_1, t_2} (t = \D{a}\; t_1) \times (\D{a} \; t_1= \D{a} \; t_2) \times \denote{R} \; t_1 \times \denote{R} \; t_2}} \equiv& \\
   & \lambdaFn{t}{\merely{\sum_{t_1, t_2} (t = \D{a} \; t_1) \times (t = \D{a} \; t_2) \times \denote{R} \; t_1 \times \denote{R} \; t_2}} \equiv& \\
   & \lambdaFn{t}{\merely{(\sum_{t_1} (t = \D{a} \; t_1) \times \denote{R} \; t_1) \times (\sum_{t_2} (t = \D{a} \; t_2)  \times  \denote{R} \; t_2)}} \equiv& \\
   & \lambdaFn{t}{\merely{\sum_{t_1} (t = \D{a} \;t_1) \times \denote{R} \; t_1}}&
\end{flalign*}

We used the following semi-ring identities:
\begin{flalign*}
(a=b) \times  (b=c) \equiv & (a=b) \times (a=c)\\
\sum_{t_1, t_2} E_1(t_1) \times E_2(t_2) \equiv & \sum_{t_1} E_1(t_1) \times \sum_{t_2} E_2(t_2)\\
\merely{n \times n} \equiv & \merely{n}
\end{flalign*}

\textbf{Deductive \sem Proof}:
\begin{flalign*}
\Rightarrow \forall \; t. & \exists_{t_0} (\denote{a} \; t_0 = t) \land \denote{R} \; t_0 \leftrightarrow & \\
& \exists_{t_1, t_2} (\denote{a} \; t_1 = t) \land \denote{R} \; t_1 \land \denote{R} \; t_2 \land (\denote{a} \; t_1 = \denote{a} \; t_2) &
\end{flalign*}

Then case split on $\leftrightarrow$. 
Case $\rightarrow$: instantiate both $t_1$ and $t_2$ with $t_0$, then apply hypotheses.
Case $\leftarrow$: instantiate $t_0$ with $t_1$, then apply hypotheses.

\end{small}
\caption{The proof of  equivalence  $\texttt{Q2} \equiv \texttt{Q3}$.}
\label{fig:magic-distinct}
\end{figure}

Notice that the definition of projection requires that the relation
has finite support; otherwise, the summation is over an infinite set
and is undefined in $\N$.  This creates a major problem for our
equivalence proofs, since we need to prove, for each intermediate
result of a SQL query, that it returns a relation with finite support.
This adds significant complexity to the otherwise simple proofs of
equivalence.

\paragraph{\sem Semantics} To handle this challenge, our semantics
generalizes $K$-Relation in two ways: we no longer require relations to
have finite support, and we allow the multiplicity of a tuple to be an
arbitrary cardinality (possibly infinite).  More precisely, in our
semantics a relation is interpreted as a function:
\begin{align*}
   & \Tuple \; \sigma \rightarrow \Type
\end{align*}
where $\Type$ is the class of homotopy types.  We call such a
relation a \emph{HoTT-relation}.  A homotopy type $n \in \Type$ is an
ordinary type with the ability to prove membership and equality between
types.  

Homotopy types form a commutative semi-ring and can well represent cardinals.
Cardinal number 0  is the empty homotopy type \textbf{0}, 1 is the unit type \textbf{1}, multiplication is the product type $\times$, addition is the sum type $+$, infinite summation is the dependent product type $\Sigma$, and truncation is the squash type $\merely{.}$. 
Homotopy types generalize  natural numbers and their semiring operations,
and is now well integrated with automated proof assistants like Coq \footnote{After adding the Univalence Axiom to Coq's underlying type theory.}. As we show in the rest of this paper, the equivalence proofs retain the simplicity of $\N$-relations and can be
easily mechanized, but without the need to prove finite support. 

In addition, homotopy type theory unifies squash type and proposition. Using the fact that propositions as types in homotopy type theory \cite[Ch 1.11]{hottBook}, in order to prove the equivalence of two squash types, $\merely{p}$ and $\merely{q}$, it is sufficient to just prove the bi-implication ($p \leftrightarrow q$), which is arguably easier in Coq. For example, transforming the equivalence proof of Figure~\ref{fig:magic-distinct} to bi-implication would not require a series of equational rewriting using semi-ring identities any more, which is complicated because it is under the variable bindings of $\Sigma$. The bi-implication can be proved in Coq by deduction easily.  

The queries of the above rewrite rule fall in the well studied category of conjunctive queries,
for which equality is decidable (equality between arbitrary SQL queries is undecidable).
Using Coq's support for automating deductive reasoning (with \emph{Ltac}), we
have implemented a decision procedure for the equality of conjunctive
queries, the aforementioned rewrite rule can thus be proven in one line of Coq code.

%%% Local Variables:
%%% mode: latex
%%% TeX-master: "paper"
%%% End:

\section{\sem and Its Semantics}
\label{sec:semantics}

In this section, we present \sem, a SQL-like language for expressing rewrite rules.
To simplify discussion, we first describe how relational 
data structures are modeled
in Section~\ref{sec:data-model}. We then describe \sem, a language built 
on top of our relational data structures that covers
all major features of SQL, in Section~\ref{sec:input-lang}. Next, we define
\outputLang, the formal expressions into which \sem is translated, in
Section~\ref{sec:target-lang}.

\subsection{Data Model}
\label{sec:data-model}

We first describe how schemas for relations and tuples are modeled
in \sem. Both of these foundational concepts from relational 
theory~\cite{Codd70CACM} are what \sem uses to build upon.

\begin{figure}[t]
\[
\begin{array}{llll}
  \type \in \texttt{Type}                 & ::=     & \sqlInt \sep \sqlBool \sep \sqlString \sep \; \ldots \\ 
  \denote{\sqlInt}                        & ::=     & \Integer \\
  \denote{\sqlBool}                       & ::=     & \Bool \\
  \denote{\sqlString}                     & ::=     & \String \\
  \ldots & & \\ \\

  \schema \in \Schema                     & ::=     & \emptySchema \\
                                          & \; \mid & \leaf{\type} \\
                                          & \; \mid & \node{\schema_1}{\schema_2} \\ \\

  \Tuple \; \emptySchema                  & ::=     & \Unit \\
  \Tuple \; (\node{\schema_1}{\schema_2}) & ::=     & \Tuple \; \schema_1 \times \Tuple \; \schema_2 \\
  \Tuple \; (\leaf{\type})                & ::=     & \denote{\type} \\
\end{array}
\]
\caption{Data Model of \sem}
\label{fig:data-model}
\end{figure}

\begin{figure}[t]
\begin{minipage}{.4\columnwidth}
\tikzset{
  internode/.style = {shape=circle, draw, align=center}
}
\begin{tikzpicture}
  \begin{scope}
  \node [internode, anchor=north] (root) at (0, 0) {}
    child {node  {\sqlString}} 
    child {node [internode] {} 
      child {node {\sqlInt}} 
      child {node {\sqlBool}} 
    }; 
    
  \end{scope}
\end{tikzpicture}
\end{minipage}
\begin{minipage}{.6\columnwidth}
\begin{small}
\[
\begin{array}{lll}
  \schema : \Schema & = & \node{(\leaf{\sqlInt})}{} \\
                      & & \quad (\node{(\leaf{\sqlInt})}{} \\
                      & & \quad\quad (\leaf{\sqlBool}))  \\
\Tuple \; \schema & = &   \String \times (\Integer \times \Bool) \\
t : \Tuple \; \schema & = &  \mkPair{``Bob''}{\mkPair{52}{\true}}
 \end{array}
\]
\end{small}
\end{minipage}
\caption{An Example of \sem Schema and Tuple}
\label{ex:data-model}
\end{figure}

\paragraph{Schema and Tuple}
We briefly review the standard SQL definitions of a schema and a tuple.
Conceptually, a database schema is an unordered bag of $(n, \tau)$ pairs, where
$n$ is an attribute name, and $\tau$ is the type of the attribute.
For example, the schema of a table containing personal information might be: 
\[ \{(\text{Name}, \sqlString), (\text{Age}, \sqlInt), (\text{Married}, \sqlBool)\} \] 

A database tuple is a collection of values that conform to a
given schema. For example, the following is a tuple with the 
aforementioned schema:
\[ \{\text{Name}:\text{``Bob''}; \; \text{Age}:52; \; \text{Married}: \true \} \]

Attributes from tuples are accessed using record syntax. For instance
$t.\text{Name}$ returns ``Bob'' where $t$ refers to the tuple above. 

As shown in Figure~\ref{fig:data-model},
we assume there exists a set of SQL types \texttt{Type}, which can be 
denoted into types in Coq.

In \sem, we define schemas and tuples as follows.  A schema is modeled
as a collection of types
organized in a binary tree, with each type corresponding to an attribute. 
As shown in~\figref{data-model}, a schema can be constructed
from the $\texttt{empty}$ schema,
an individual type $\tau$, or recursively 
from two schema nodes $s_1$ and $s_2$, corresponding to the branches
of the subtree. 
As we will see, this organization is beneficial in 
both writing \sem rewrite rules and also reasoning about the equivalence 
of schemas. 

The tuple type in \sem is defined as a dependent type on a schema. As shown in~\figref{data-model}, a tuple is an nested pair with the identical structure as its schema. Given a schema $s$, 
if $s$ is the empty schema, then the (only) instance of $\Tuple \; \emptySchema$ is  {\tt Unit} (i.e., empty) tuple. 
Otherwise, if $s$ is a leaf node in the schema tree with type $\tau$,
then a tuple is simply a value of the type $\D{\tau}$. 
Finally, if $s$ is recursively
defined using two schemas $s_1$ and $s_2$, then the resulting tuple
is an instance of a product type $\Tuple(s_1) \times \Tuple(s_2)$. 
 
As an illustration,
Figure~\ref{ex:data-model} shows a tuple $\tuple$ and its schema $\schema$, where $\schema$ =
\texttt{(node (leaf \textrm{Int}) (node (leaf \textrm{Int}) (leaf
\textrm{Int})))} and 
$t$ has the nested pair type $(\String \times (\Integer \times \Bool))$.
To access an element from a tuple, \sem uses path
expressions with selectors $\ProjL$ and $\ProjR$. For instance, 
the path $\ProjL.\ProjR$ retrieves
the value 52 from the tuple $t$ in \figref{data-model}.
As will be shown in \secref{denotation}, path expressions will be denoted to standard 
pair operations, i.e., $\ProjL$ will be denoted to ``$\fst{}$'', which returns the first element from 
a pair, and $\ProjL$ will be denoted to ``$\snd{}$'', which returns the second. Such expressions can be
composed to retrieve nested pair types. For instance, $\ProjL.\ProjR$ will be denoted to ``$\snd{\fst{}}$'', thus it retrieves 52 from $t$ in \figref{data-model}.

\paragraph{Relation}
In \sem a relation is modeled as a function from tuples to homotopy
types called HoTT-relations, $\Tuple \; \schema \rightarrow \Type$, as
already discussed in~\secref{overview}; we define homotopy types
shortly.

\paragraph{Discussion}
We briefly comment on our choice of data model.  There are two
approaches to defining tuples in database theory \cite{TheAliceBook}:
the named approach and unnamed approach.  We chose an unnamed
approach, because it avoids name collisions, and because proof
assistants like Coq can decide schema equivalence of unnamed schemas
based on structural equality. Previous work~\cite{MalechaMSW10POPL}
adopted an unnamed approach as well, for the same reason.  Our choice
for representing tuples as trees, rather than as ordered lists, is
non-standard: we do this in order to allow our language to express
generic rewrite rules, without specifying a particular schema for the
input relations, see \secref{rewriteRules}.

Finally, we note that, in our model, a tuple is a dependent type,
which depends  on
its schema.  We use dependent types to ensure that a tuple, which is a nested pair, must have the same structure as its schema, which is a binary tree, by construction. This allows us to denote a path expression (composed by $\ProjL$, $\ProjR$) to a series of corresponding pair operations (composed by ``$\fst{}$'', ``$\snd{}$'') easily.

\subsection{\sem: A SQL-like Query Language}
\label{sec:input-lang}

We now describe \sem, our source language used to 
express rewrite rules. 
Figure~\ref{fig:sql-syntax} defines the syntax of \sem. We divide the
language constructs of \sem 
into four categories: queries, predicates, expressions, and projections.

\paragraph{Queries} A query takes in relation(s) and outputs another relation.
The input to a query can be a base relation (called a $Table$ in~\figref{sql-syntax})
or other queries, including 
projections, cross product, selections, bag-wise operations (\texttt{UNION ALL} and \texttt{EXCEPT}),
and finally conversion to sets (\texttt{DISTINCT}).

\paragraph{Predicates} Predicates are used as part of selections 
(i.e., filtering of tuples) in queries. Given a tuple $t$, predicates return
a Boolean value to indicate whether $t$ should be retained in
the output relation. 

\paragraph{Expressions} Expressions are used both in predicates and projections,
and they evaluate to values (e.g., of type $\Integer$, $\Bool$, $\String$, etc).
Expression includes conversions from projection to expression ($\Var{\proj}$), uninterpreted functions on expressions, 
aggregators of a query, and casts of an expression ($\CastExpr{\proj}{\expr}$). 

$\Var{\proj}$ converts a projection $\proj$ into an expression. For example, $\Var{a} = \Var{b}$ is an equality predicate on attribute $a$ and attribute $b$, where $\Var{a}$ and $\Var{b}$ are the expressions representing attribute $a$ and $b$. As shown next, we use a projection to represent an attribute.

Uninterpreted functions of expressions $f(e_1, \ldots, e_n)$ are used to represent arithmetic operations on expressions such as addition, multiplication, division, mode, and constants (which are nullary uninterpreted functions).

$\texttt{CASTEXPR}$\; is a special construct that is used to represent castings of meta-variables to express generic query rewrite rules. A comprehensive discussion of meta-variables can be found in \secref{rewriteRules}. A normal (non-generic) query would not need $\texttt{CASTEXPR}$.

\paragraph{Projections} When applied to a relation, 
projections denote a subset of attributes to be returned.
A projection is defined to be 
a tuple to tuple function. It can be the identity function (\Star), or 
return the subtree denoted by any of the 
path expressions as discussed in~\secref{data-model}, such as returning the
left (\ProjL) or right (\ProjR) subtree of the tuple.
Any empty tuple can also be produced using \texttt{empty}. Multiple
projections can be composed using ``{\tt .}''.
Two projections $\proj_1$ and $\proj_2$ can be applied to the input tuple separately with the results combined together using ``{\tt ,}''. $\Evaluate{\proj}$ is used to convert a projection to an expression. 
Below are examples of \sem query using projections:

\vspace{2mm}

{\centering
\begin{tabular}{lll}
   & SQL & \inputLang \\ \hline
 $q_1$ & \texttt{SELECT $R.*$ FROM $R, S$} &  \texttt{SELECT Left.* FROM $R, S$} \\ 
 $q_2$ & \texttt{SELECT $S.*$ FROM $R, S$} &  \texttt{SELECT Right.* FROM $R, S$} \\
 $q_3$ & \texttt{SELECT $S.p$ FROM $R, S$} &  \texttt{SELECT Right.$p$ FROM $R, S$} \\
 $q_4$ & \texttt{SELECT $R.p_1$, $S.p_2$} & \texttt{SELECT Left.$p_1$, Right.$p_2$} \\
 &\texttt{FROM $R, S$} & \texttt{FROM $R, S$} \\
 $q_5$ & \texttt{SELECT $p_1+p_2$ FROM $R$} & \texttt{SELECT $\Evaluate add(\Var{p_1},\Var{p_2})$} \\ 
& & \texttt{FROM $R$} \\
\end{tabular}
}

In $q_1$, we compose the path expressions
\texttt{Left} and \texttt{*} to represent projecting all attributes of $R$ from
a tuple that is in the result of $R \join S$ \footnote{Technically 
$\join$ denotes natural join in relational theory~\cite{Codd70CACM}. However, since
\sem schemas are unnamed, there are no shared
names between schemas, and thus natural joins are equivalent to cross products.
Hence, we use $\join$ for cross product of relations to distinguish
it from Cartesian product of types ($\times$) to be discussed in~\secref{target-lang}.}.  In $q_3$,
we compose \texttt{Right} and $k$ to project to a single attribute from $S$. 
The variable $k$ is a projection to a singleton tuple, which is the way to represent attributes in our semantics.
In $q_4$, we are projecting one attribute $\proj_1$ from $R$, and another
attribute $\proj_2$ from $S$ using the projection combinator ``{\tt ,}''.
In $q_5$, to represent $\proj_1+\proj_2$, we first convert attributes ($\proj_1$,
$\proj_2$) to expressions ($\Var{\proj_1}$, $\Var{\proj_2}$), then
use an uninterpreted function $\texttt{add}$ to represent addition, and cast that
back to a projection using $\Evaluate$.

\subsection{Expressing Rewrite Rules}
\seclabel{rewriteRules}

\sem is a language for expressing query rewrite rules, and each such
rule needs to hold over all relations (i.e., both sides of each rule
need to return the same relation for all schemas and instances, as
discussed in~\secref{overview}), and likewise for predicates and
expressions.  To facilitate that, \sem allows users to declare
meta-variables for queries, predicates, and expressions, and uses two
functions $\CastPred{}{}$ and $\CastExpr{}{}$, as we  illustrate next.

First, consider meta-variables.  Referring to~\figref{union-slct}, the
base tables $R$ and $S$ are meta-variables that can be quantified over
all possible relations, and $\pred$ is a meta-variable ranging over
predicates.
For another example, consider the rewrite rule in
Fig.~\ref{fig:magic-distinct}: we want to say that the rule holds for
any relation $R$ with a schema having an attribute $a$.  We express
this in \sem by using a meta-variable $p$ instead of the attribute $a$:
\begin{flalign*}
\quad\;
  & \texttt{DISTINCT} \; \SelectFrom{\ProjL.p}{R, R} \\
  & \texttt{WHERE \Var{\ProjL.p} = \Var{\ProjR.p}}\\
  &  \equiv \   \texttt{DISTINCT} \; \SelectFrom{p}{R}
\end{flalign*}

Since our data model is a tree, it is very convenient to use a
meta-variable $p$ to navigate to any leaf (corresponding to an
attribute), and it also easy to concatenate two schemas using the
$\node{}{}$ constructor; in contrast, a data model based on ordered
lists would make the combination of navigation and concatenation more
difficult.

Second, we explain the functions $\CastPred{}{}$ and $\CastExpr{}{}$
in \sem with another example.  Consider the rewrite rule (to be
presented in \secref{basic-rewrite}) for pushing down selection
predicates, written informally as:
\begin{small}
\begin{flalign*}
\quad\;
        & \SelectFrom{\texttt{*}}{R, \; (\SelectFromWhere{\texttt{*}}{S}{b})} \quad \equiv &\\
        & \SelectFromWhere{\texttt{*}}{R, \; S}{b} &
\end{flalign*}
\end{small}
In this rule, $b$ is a meta-variable that ranges over all
possible Boolean predicates.  In standard SQL, the two occurrences of
$b$ are simply identical expressions, but in \sem the second
occurrence is in a environment that consists of the schemas of both
$R$ and $S$.
In \sem, this is done rigorously using the $\CastPred{}$ construct:
\begin{small}
\begin{flalign*}
\quad\;
        & \SelectFrom{\Star}{R, \; (\SelectFromWhere{\Star}{S}{b})} \quad \equiv &\\
        & \SelectFromWhere{\Star}{R, \; S}{(\CastPred{\ProjR}{b})} &
\end{flalign*}
\end{small}
The expression $\CastPred{\ProjR}{b}$ is function composition: it
applies $\ProjR$ first, to obtain the schema of $R$, then evaluates
the predicate $b$ on the result.

Requiring explicit casts is an important feature of \sem: doing so ensures that 
rewrite rules are only applicable in situations where they are valid. In the example
above, the rule is only valid for all predicates $b$ where $b$ refers only to attributes
in $S$, and the cast operation makes that explicit. The {\tt CASTEXPR} construct works similarly
for expressions.

\begin{figure}[t]
\centering
\[
\begin{array}{llll}
  q \in \texttt{Query} & ::=  &  \texttt{Table} \\
        & \; \mid & \texttt{SELECT} \; p \; q      \\ 
        & \; \mid & \texttt{FROM} \; q_1, \ldots, q_n      \\
        & \; \mid & q \; \texttt{WHERE} \; p        \\
        & \; \mid & q_1 \; \texttt{UNION ALL} \; q_2   \\
        & \; \mid & q_1 \; \texttt{EXCEPT} \; q_2      \\
        & \; \mid & \texttt{DISTINCT} \; q \\
 b \in \texttt{Predicate} & ::= & e_1 \; \texttt{=} \; e_2 \\
       & \; \mid &  \texttt{NOT} \; b 
            \mid b_1 \; \texttt{AND} \; b_2 
            \mid b_1 \; \texttt{OR} \; b_2  
            \mid \texttt{true} \mid \texttt{false} \\
       & \; \mid & \texttt{CASTPRED} \; p \; b \\
       & \; \mid & \texttt{EXISTS} \; q \\
 e \in \texttt{Expression} & ::= & \Var \; p \\
       & \; \mid & f(e_1, \ldots, e_n)   
            \mid agg(q)   \\
       & \; \mid & \texttt{CASTEXPR} \; p \; e \\
 p \in \texttt{Projection} & ::= & \Star \mid \ProjL \mid \ProjR \mid \texttt{Empty} \\
       & \; \mid &  \Compose{p_1}{p_2} \\
       & \; \mid & \Duplicate{p_1}{p_2} \\
       & \; \mid & \Evaluate{e}  
\end{array}
\]
\caption{Syntax of \inputLang}
\label{fig:sql-syntax}
\end{figure}

\subsection{UniNomials}
\label{sec:target-lang}

The interpretation of a \inputLang expression is a formal expression over \outputLang, 
which is an algebra of univalent types.

\begin{definition}
  \outputLang, the algebra of Univalent Types, consists of
  $(\Type, \zero, \one, +, \times, \negate{\cdot}, \merely{\cdot},
  \sum)$, where:
  \begin{itemize}
  \item $(\Type, \zero, \one, +, \times)$ forms a semi-ring, where
    $\Type$ is the universe of univalent types, $\zero, \one$ are the
    empty and singleton types, and $+, \times$ are binary operations
    $\Type \times \Type \rightarrow \Type$: $\nomial_1 + \nomial_2$,
    is the direct sum, and $\nomial_1 \times \nomial_2$ is the
    Cartesian product.
  \item $\negate{\cdot}, \merely{\cdot}$ are derived unary operations
    $\Type \rightarrow \Type$, where $(\negate{\zero}) = \one$ and
    $(\negate{\nomial}) = \zero$ when $\nomial\neq 0$, and
    $\merely{\nomial}=\negate{(\negate{\nomial})}$.
  \item $\sum : (A \rightarrow \Type) \rightarrow \Type$ is the infinitary
    operation: $\sum f$ is the direct sum of the set of types
    $\setof{f(a)}{a \in A}$.
  \end{itemize}
\end{definition}

Following standard notation~\cite{hottBook}, we say that the homotopy type $n$ inhabits 
some universe $\Type$. The base cases of $n$ come from the
denotation of HoTT-Relation and equality of two tuples (In HoTT, propositions are squash types, which are {\zero} or {\one} \cite[Ch 1.11]{hottBook}). The denotation of \sem will be shown Section~\ref{sec:denotation}. There are 5 type-theoretic operations on $\Type$:

\paragraph{Cartesian product ($\times$)} Cartesian product of univalent types is
analogously the same concept as the Cartesian product of two sets. For 
$A, B:\Type$, the cardinality of $A \times B$ is the cardinality of $A$ multiplied by
the cardinality of $B$. For example, we denote the cross product of two
HoTT-Relations using the Cartesian product of univalent types:

\[ \denote{R_1 \bowtie R_2} \teq \introt (\denote{R_1} \; \fst{t}) \times (\denote{R_2} \; \snd{t})  \]

The result of $R_1 \bowtie R_2$ is a HoTT-Relation with type 
$\Tuple \; \schema_{R_1 \bowtie R_2}$ $\rightarrow \Type$. 
For every tuple 
$t \in R_1 \bowtie R_2$, its cardinality equals to the cardinality of $t$'s left
sub-tuple ($\fst{t}$) in $R_1$ ($\denote{R_1} \; \fst{t}$) multiplied by the
cardinality of $t$'s right sub-tuple ($\snd{t}$) in $R_2$ ($\denote{R_2} \;
\snd{t}$).

\paragraph{Disjoint union ($+$)} Disjoint union on univalent types
is analogously the same
concept as union on two disjoint sets. For $A,B: \Type$, the cardinality of $A
+ B$ is the cardinality of $A$ adding the cardinality of $B$. For example, 
\texttt{UNION ALL} denotes to $+$: 

\[ \denote{\UNIONALL{R_1}{R_2}} \teq \introt (\denote{R_1} \; \tuple) + (\denote{R_2} \; t)  \]

In SQL, \texttt{UNION ALL} means bag semantic union of two relations. Thus a
tuple $t \in \UNIONALL{R_1}{R_2}$ has a cardinality of its
cardinality in $R_1$ ($\denote{R_1} \; \tuple$) added by its cardinality in $R_2$
($\denote{R_2} \; \tuple$). 

We also denote logical \texttt{OR} of two predicates using $+$. $A+B$ is corresponded
type-theoretic operation of logical OR if $A$ and $B$ are squash types (recall that 
squash types are {\zero} or \one ~\cite[Ch 1.11]{hottBook}).

\paragraph{Squash ($\merely{n}$)} Squash is a type-theoretic operation that
truncates a univalent type to {\zero} or \one. For $A:\Type$, $\merely{A} =
\zero$ if $A$'s cardinality is zero and $\merely{A} = \one$ otherwise. An
example of using squash types is in denoting \texttt{DISTINCT} (\texttt{DISTINCT}
means removing duplicated tuples in SQL, i.e., converting a bag to a set):

\[ \denote{\DISTINCT{R}} \teq \introt \merely{\denote{R} \; \tuple} \]

For a tuple $t \in \DISTINCT{R}$, its cardinality equals to $1$ if its cardinality in $R$ is non-zero and equals to $0$ otherwise. This is exactly $\merely{\denote{R}\; \tuple}$.

\paragraph{Negation ($\negate{n}$)} If $n$ is a squash type, $\negate{n}$ is the
negation of $n$. We have $\negate{\zero} = \one$ and $\negate{\one} = \zero$.
Negation is used to denote negating a predicate and to denote \texttt{EXCEPT} . For
example, \texttt{EXCEPT} is used to denote negation:

\[ \denote{R_1 \; \texttt{EXCEPT} \; R_2} \teq \introt (\denote{R_1} \; \tuple) \times (\negate{\merely{\denote{R_2} \; \tuple}}) \]

A tuple $t \in R_1 \; \texttt{EXCEPT} \; R_2$ retains its multiplicity 
in $R_1$ if its multiplicity in $R_2$ is
not 0 (since if $\denote{R_2} \; \tuple \neq \zero$, then ${\merely{\denote{R_2}  \; \tuple} \rightarrow  \zero} = \one $). 

\paragraph{Summation ($\sum$)} Given $A:\Type$ and $B:A \rightarrow \Type$,
\;$\sum_{x:A} B(x)$ is a dependent pair type 
$\sum$ is used to denote projection. For example:

\[ \denote{\texttt{SELECT $k$ FROM $R$}} \teq 
  \introt \sum_{\tuple': \texttt{Tuple} \; \sigma_R} 
    \merely{\denote{k} \; t' = t} \times \denote{R} \; t' \]

For a tuple $t$ in the result of this projection query, its cardinality is the summation
of the cardinalities of all tuples of schema $\sigma_A$ that also has the
same value on column $k$ with $\tuple$. Here $\merely{\denote{k} \; \tuple' =
\tuple}$ equals to \one~if $\tuple$ and $\tuple'$ have same value on $k$,
otherwise it equals to \zero. Unlike $K$-Relations,  
using univalent types allow us to support summation over an infinite domain
and evaluate expressions such as the projection described above.

In general, proving rewrite rules in \outputLang enables us to use 
powerful automatic proving techniques such as associative-commutative term 
rewriting in
semi-ring structures (recall that $\Type$ is a semi-ring) similar to the 
\texttt{ring} tactic~\cite{ringtac} and
Nelson-Oppen algorithm on congruence closure~\cite{NelsonO80JACM}. Both 
of which mitigate our proof burden. 

%%% Local Variables:
%%% mode: latex
%%% TeX-master: "paper"
%%% End:

\section{Denotation of \sem}
\label{sec:denotation}

In this section we define the denotational semantics of \sem. We 
first discuss the translation of \sem constructs into \outputLang. 
Then, in~\secref{more-sql}, 
we describe how advanced features of SQL (such as integrity constraints
and indexes) can be expressed using
\sem and subsequently translated.

Figure~\ref{fig:denote-query} shows the translation of \sem to
\outputLang. The translation rules make use of contexts. A {\em context schema} $\Context$ is a schema (see the
definition of $\Schema$ in Fig.~\ref{fig:data-model}); a {\em context}
$g$ is a tuple of type $\Tuple\; \Context$ associated to that schema.
Intuitively, the context consists of the concatenation of all tuple
variables occurring in a surrounding scope.

For example, consider the \sem query with correlated subqueries
in~\figref{contexts} where path expressions are used to refer to
relations in predicates, as discussed in~\secref{data-model}.  As in
standard SQL, evaluation proceeds from the outermost to the innermost
query, starting with query $q_1$. After the \texttt{FROM} clause in
$q_1$ is processed, the context consists of $R_1$'s schema
($\sigma_{R_1}$), which is then passed to the query $q_2$. In turn,
$q_2$ then processes its \texttt{FROM} clause, and appends the schema
of $R_2$ to the context ($\node{\sigma_{R_1}}{\sigma_{R_2}}$), and
this context is used to evaluate the path expression $q_2$'s predicate
(\texttt{right.$k$ = left.$k$}, i.e., \texttt{$R_1$.k = $R_2$.k}), and
similarly when $q_3$ evaluates its predicate.

In our system, contexts are implemented as tuples. 
To make passing of contexts explicit, in the following,
each \sem construct takes in a context tuple (represented by {\tt Tuple $\Context$}), 
and is translated to functions that take in 
both a tuple ($t$) and a context tuple ($g$).

\begin{figure}
\begin{small}
\[
\begin{array}{ll} 
  \texttt{SELECT  $*$  FROM  $R_1$ WHERE}                          & q_1 \\ 
  \quad {\color{blue}\texttt {-- predicate in $q_2$: $R_2$.b = $R_1$.a }} & \\ 
  \quad \texttt {EXISTS SELECT $*$ FROM $R_2$ WHERE right.$p_2$=left.$p_1$ AND} & q_2\\
  
  \qquad {\color{blue}\texttt {-- predicate in $q_3$: $R_3$.c=$R_2$.b }} & \\ 
  \qquad \texttt{EXISTS  SELECT $*$ FROM $R_3$} & \\
  \hspace{0.7in} \texttt{WHERE right.$p_3$=left.right.$p_2$}  & q_3 
\end{array}   
\]
\end{small}
\hrule
\vspace{0.1in}
\centering
\begin{tabular}{|c|l|}\hline
Query &  Context schema \\ \hline
init & \texttt{$\Context_0$=empty} \\ \hline
$q_1$ & \texttt{$\Context_1$=node $\Context_0$ $\sigma_{R_1}$} \\ \hline
$q_2$ & \texttt{$\Context_2$=node $\Context_1$ $\sigma_{R_2}$} \\ \hline
$q_3$ & \texttt{$\Context_3$=node $\Context_2$ $\sigma_{R_3}$} \\ \hline
\end{tabular}
\figlabel{contexts}
\caption{Using Contexts in Evaluating Correlated Subqueries}
\end{figure}

\subsection{Denoting Basic HoTTSQL Constructs}

\begin{figure*}[t]
\centering
\[
\begin{array}{llll}

  % QUERY
   \multicolumn{3}{l}{ \framebox[1.1\width]{ 
    $\denoteQuery{\Context}{\query}{\schema} : \Tuple \; \Context \rightarrow \Tuple \; \schema \rightarrow \Type $} }   & \texttt{(* $\Query$ *)} \\ \\

  \denoteQuery{\Context}{table}{\schema} & \teq &
    \intros \denoteTable{table} \; t \\

  \denoteQuery{\Context}{\SELECT{\proj}{\query}}{\schema} & \teq & 
    \multicolumn{2}{l}{
    \intros \sum_{\tuple':\Tuple \; \schema'}{
      (\denoteProj{\proj}{\node{\Context}{\schema'}}{\schema} \; \mkPair{\context}{\tuple'} = \tuple) \times
      \denoteQuery{\Context}{\query}{\schema'}} \; \context \; \tuple'} \\
  
  \denoteQuery{\Context}{\FROM{\query_1, \query_2}}{\node{\schema_1}{\schema_2}} & \teq &
    \intros
      \denoteQuery{\Context}{\query_1}{\schema_1} \; \context \; \fst{\tuple} \times
      \denoteQuery{\Context}{\query_2}{\schema_2} \; \context \; \snd{\tuple} \\

  % defining the bottom of the FROM recursion for one element (instead of zero) has the 
  % advantage that we can access the second table of (FROM a b) with .2 instead of .2.1
  \denoteQuery{\Context}{\FROM{\query}}{\schema} & \teq & 
    \intros \callQuery{\Context}{\query}{\schema} \\

  \denoteQuery{\Context}{\WHERE{\query}{\pred}}{\schema} & \teq & 
    \intros
      \callQuery{\Context}{\query}{\schema} \times
      \denotePred{\node{\Context}{\schema}}{\pred} \; \mkPair{\context}{\tuple} \\

  \denoteQuery{\Context}{\UNIONALL{\query_1}{\query_2}}{\schema} & \teq & 
    \intros
      \callQuery{\Context}{\query_1}{\schema} +
      \callQuery{\Context}{\query_2}{\schema} \\

  \denoteQuery{\Context}{\EXCEPT{\query_1}{\query_2}}{\schema} & \teq & 
    \intros
      \callQuery{\Context}{\query_1}{\schema} \times
      (\negate{(\callQuery{\Context}{\query_2}{\schema})}) \\

  \denoteQuery{\Context}{\DISTINCT{\query}}{\schema} & \teq & 
    \intros \merely{\callQuery{\Context}{\query}{\schema}} \\ \\

  % PRED
   \multicolumn{3}{l}{ \framebox[1.1\width]{$ \denotePred{\Context}{\pred} 
   :  \Tuple \; \Context \rightarrow \Type $  } } & \texttt{(* $\Pred$ *)}\\
  \\
  \denotePred{\Context}{e_1 = e_2} & \teq & 
    \intro(\callExpr{\Context}{e_1}{\type} = \callExpr{\Context}{e_2}{\type}) \\

  \denotePred{\Context}{\AND{\pred_1}{\pred_2}} & \teq & 
    \intro \callPred{\Context}{\pred_1} \times \callPred{\Context}{\pred_2} \\

  \denotePred{\Context}{\OR{\pred_1}{\pred_2}} & \teq & 
    \intro \merely{\callPred{\Context}{\pred_1} + \callPred{\Context}{\pred_2}} \\

  \denotePred{\Context}{\NOT{\pred}} & \teq & 
    \intro \negate{(\callPred{\Context}{\pred})}   \\

  \denotePred{\Context}{\EXISTS{\query}} & \teq & 
    \intro \merely{\sum_{t:\Tuple \; \schema}{\callQuery{\Context}{\query}{\schema}}} \\

  \denotePred{\Context}{\FALSE} & \teq & \intro \zero \\

  \denotePred{\Context}{\TRUE} & \teq & \intro \one \\
 
  \denotePred{\Context}{\CastPred{\proj}{\pred}}  & \teq  &
    \intro \denotePred{\Context'}{\pred} \; (\denoteProj{\proj}{\Context}{\Context'} \; \context) \\

   \\
   %Expr
  \multicolumn{3}{l}{ \framebox[1.1\width]{$ \denoteExpr{\Context}{\expr}{\type} :
    \Tuple \; \Context \rightarrow \denote{\type} $}  } & \texttt{(* $\Expr$ *)}  \\ \\

  \denoteExpr{\Context}{\Var \; \proj}{\type} & \teq & 
     \intro \denoteProj{\proj}{\Context}{\leaf \; \type} \; \context   \\

  \denoteExpr{\Context}{f(e_1, \ldots)}{\type} & \teq & 
    \intro  \denote{f}(\callExpr{\Context}{e_1}{\type_1}, \ldots )  \\

  \denoteExpr{\Context}{agg(\query)}{\type'} & \teq & 
    \intro \denote{agg} \; ( \denoteQuery{\Context}{\query}{\leaf \; \type} \; g ) \\

  \denoteExpr{\Context}{\CastExpr{\proj}{\expr}}{\type} & \teq &
    \intro \denoteExpr{\Context'}{\expr}{\type} \; (\denoteProj{c}{\Context}{\Context'} \; g) \\ \\

    % Proj
  \multicolumn{3}{l}{ \framebox[1.1\width]{$ 
    \denoteProj{\proj}{\Context}{\Context'} : \Tuple \; \Context \rightarrow \Tuple \; \Context'$ } }  & \texttt{(* $\Proj$ *)}  \\ \\
 
  \denoteProj{*}{\Context}{\Context} & \teq  &  \intro \context  \\

  \denoteProj{\ProjL}{\node{\Context_0}{\Context_1}}{\Context_0} & \teq & \intro \fst{\context} \\

  \denoteProj{\ProjR}{\node{\Context_0}{\Context_1}}{\Context_1} & \teq & \intro \snd{\context} \\

  \denoteProj{\Empty}{\Context}{\emptySchema} & \teq & \intro \mkUnit \\

  \denoteProj{\Compose{\proj_1}{\proj_2}}{\Context}{\Context''} & \teq & \intro
    \denoteProj{\proj_2}{\Context'}{\Context''}\;(\denoteProj{\proj_1}{\Context}{\Context'} \; \context) \\

  \denoteProj{\Duplicate{\proj_1}{\proj_2}}{\Context}{\node{\Context_0}{\Context_1}} & \teq & 
    \intro (\denoteProj{\proj_1}{\Context}{\Context_0} \; \context, \; \denoteProj{\proj_2}{\Context}{\Context_1} \; \context) \\

  \denoteProj{\Evaluate{\expr}}{\Context}{\leaf \; \type} & \teq & \intro \callExpr{\Context}{\expr}{\type} \\

\end{array}
\]
\caption{Denotational Semantics of \inputLang}
\label{fig:denote-query}
\end{figure*}

\paragraph{Queries} A query $\query$ is denoted to a function
from $\query$'s context tuple (of type $\Tuple \; \Context$) to a HoTT-Relation
(of type \( \Tuple \; \schema \rightarrow \Type \)):
 \[ \denoteQuery{\Context}{q}{\schema}: Tuple \; \Context \rightarrow  Tuple \; \schema \rightarrow \Type \]

The \texttt{FROM} clause is recursively denoted to a series of cross products of
HoTT-Relations. Each cross product is denoted using $\times$ as shown in
Section~\ref{sec:target-lang}. For example: 

\[
\begin{array}{ll}
  
  \denoteQuery{\Context}{\FROM{\query_1 , \; \query_2}}{\schema} & \teq  \\
  \intros (\denoteQuery{\Context}{\query_1}{\schema} \; \context \; \fst{\tuple}) \times
          (\denoteQuery{\Context}{\query_2}{\schema} \; \context \; \snd{\tuple}) 
\end{array}
\]

\noindent where $\fst{\tuple}$ and $\snd{\tuple}$ indexes into the context tuple $\Context$
to retrieve the schemas of $q_1$ and $q_2$ respectively.

Note the manipulation of the context tuple in the denotation of
\texttt{WHERE}: for each tuple $t$, we first evaluate $t$ against the
query before \texttt{WHERE}, using the context tuple $\context$. After
that, we evaluate the predicate $b$ by first constructing a new
context tuple as discussed (namely, by concatenating $\Context$ and
$\sigma$, the schema of $q$), passing it the combined tuple
$(\context,t)$. The combination is needed as $t$ has schema $\sigma$
while the predicate $b$ is evaluated under the schema
$\node{\Context}{\sigma}$, and the combination is easily accomplished
as $g$, the context tuple, has schema $\Context$.

\texttt{UNION ALL}, \texttt{EXCEPT}, and \texttt{DISTINCT} are denoted using +,
negation ($\negate{n}$) and merely ($\merely{n}$) on univalent types as shown in
Section~\ref{sec:target-lang}.

\paragraph{Predicates} A predicate $\pred$ is denoted to a function
from a tuple (of type $\Tuple \; \Context$) to a univalent type (of type
$\Type$):
\[ \denotePred{\Context}{\pred} : \Tuple \; \Context \rightarrow \Type  \]

\noindent More specifically, the return type $\Type$
must be a \emph{squash type}~\cite[Ch. 3.3]{hottBook}.
A squash type can only be a type of 1 element,
namely \textbf{1}, and a type of 0 element, namely \textbf{0}. \inputLang
program with the form \texttt{$\query$ WHERE $\pred$} is denoted to the
Cartesian product between a univalent type and a mere proposition. 

As an example, suppose a particular tuple $t$ has multiplicity 3 in
query $q$, i.e., $q\; t = \denote{R}{t} = {\bf 3}$, where ${\bf 3}$ is a univalent 
type. Since predicates are denoted to propositions, applying the tuple to 
the predicate returns either ${\bf 1}$ or ${\bf 0}$, and the overall result 
of the query for tuple $t$ is then either ${\bf 3} \times {\bf 0} = {\bf 0}$,
or ${\bf 3} \times {\bf 1} = {\bf 1}$, i.e., a squash type.

\paragraph{Expressions and Projections} A value expression $e$ is
denoted to a function from a tuple (of type $\Tuple \; \Context$) to its data
type, such as {\tt int} and {\tt bool} \ ($\denote{\type}$):
\[ \denoteExpr{\Context}{\expr}{\type} : \Tuple \; \Context \rightarrow \denote{\type} \]

\noindent A projection $\proj$ from $\Context$ to $\Context'$ is denoted to a function from a tuple of type $\Tuple \; \Context$ to a tuple of type $\Tuple \; \Context'$.
\[\denoteProj{\proj}{\Context}{\Context'} : \Tuple \; \Context \rightarrow \Tuple \; \Context'\]
Projections are recursively defined. 
A projection can be composed by two projections using ``{\tt .}''. The composition of two projection, ``$\Compose{p_1}{p_2}$'', where $p_1$ is a projection from $\Context$ to $\Context'$ and $p_2$ is a projection from $\Context'$ to $\Context''$, is denoted to a function from a tuple of type $\Tuple \; \Context$ to a tuple of type $\Tuple \; \Context''$ as follows:
\[\intro \denoteProj{\proj_2}{\Context'}{\Context''}\;(\denoteProj{\proj_1}{\Context}{\Context'} \; \context) \]
We apply the denotation of $\proj_1$, which is a function of type $\Tuple\;\Context \rightarrow \Tuple \; \Context'$, to the argument of composed projection $\context$, then apply the denotation of $\proj_2$ to the result of application.
A projection can also be combined by two projections using ``{\tt ,}''. The combining of two projection, $\Duplicate{p_1}{p_2}$, is denoted to:
\[ \intro (\denoteProj{\proj_1}{\Context}{\Context_0} \; \context, \; \denoteProj{\proj_2}{\Context}{\Context_1} \; \context) \] 
where we apply the denotation of $\proj_1$ and the denotation of $\proj_2$ to the argument of combined projection ($\context$) separately, and combine their results using the constructor of a pair. 

\subsection{Denoting Derived HoTTSQL Constructs}
\label{sec:more-sql}
\sem supports additional SQL features including group by, integrity constraints, and index.
All such features are commonly utilized in query optimization.

\paragraph{Grouping} Grouping is a widely-used relational operator that projects rows
with a common value into separate groups, and applies an aggregation function
(e.g., average) to each group.
In SQL, this is supported via the \texttt{GROUP BY} operator that takes in the attribute
names to form groups. \sem supports grouping by de-sugaring \texttt{GROUP BY} using
correlated subqueries that returns a single attribute relation, 
and applying aggregation function to the resulting relation~\cite{BunemanLSTW94SIGMOD}. 
Below is an example of such rewrite expressed using SQL:

\[
\begin{array}{ll} 
 \texttt{SELECT $k$, SUM($g$) FROM $R$ GROUP BY $k$}  \vspace{0.05in} \\
 \vspace{0.05in}\hspace{0.8in} \text{rewrites to} \Downarrow \\
 \texttt{SELECT DISTINCT $k$, SUM( SELECT $g$ FROM $R$} \\
  \qquad \qquad \qquad \qquad \qquad \quad \;\; \texttt{ WHERE $R.k = R_1.k$)} \\
 \texttt{FROM $R$ AS $R_1$} 
\end{array}
\] 

\noindent We will illustrate using grouping in rewrite rules in~\secref{agg-rule}.

\paragraph{Integrity Constraints}
Integrity constraints are used in real-world database systems and
facilitate various semantics-based query optimizations~\cite{DarFJST96}. 
\sem supports two
important integrity constraints: keys and functional dependency, through
syntactic rewrite. 

A {\em key constraint} requires an attribute to have unique values among all
tuples in a relation. In \sem, a projection $k$ is a key to the relation $R$ if
the following holds:

\[  
\begin{array}{l}
   \texttt{key ($k$) ($R$)} := \\
    \qquad \denoteQuery{\texttt{empty}}{\texttt{SELECT * FROM $R$}}{\schema} = \\     
    \qquad \lbr \texttt{empty} \vdash \texttt{SELECT Left.* FROM $R$, $R$} \\
   \qquad \texttt{ WHERE (\Var{Right.Left.$k$}) = (\Var{Right.Right.$k$)}}: \schema \rbr    
\end{array}
\]

To see why this definition satisfies the key constraint, note that $k$
is a key in $R$ if and only if $R$ equals to its self-join
on $k$ after converting the result into a set using \texttt{DISTINCT}. 
Intuitively, if $k$ is a key, then self-join of $R$ on $k$ will keep all the tuples of
$R$ with each tuple's multiplicity unchanged. Conversely, if $R$ satisfies the 
self-join criteria, then attribute $k$ holds unique values in $R$ and is hence a key. 

\paragraph{Functional Dependencies}
Keys are used in defining functional dependencies and indexes.
A \emph{functional dependency} constraint from attribute $a$ to $b$ requires that for any
two tuples $t_1$ and $t_2$ in $R$, $(t_1.a = t_2.a) \rightarrow (t_1.b = t_2.b)$ 
In \sem, two projections $a$ and $b$ forms a functional dependency in relation $R$ if  
the following holds:

\[
\begin{array}{l}
\texttt{fd ($a\; b$) ($R$)} := \\
\qquad \texttt{key Left.* } \lbr \texttt{empty} \vdash \texttt{DISTINCT SELECT a, b} \\ 
\qquad \qquad \qquad \qquad \quad \texttt{FROM R}:\node{(\leaf{\tau_a})}{(\leaf{\tau_b})} \rbr
\end{array}
\]

If $a$ and $b$ forms a functional dependency, then $a$ should be a key in 
the relation the results from projecting $a$ and $b$ from $R$ followed by
de-duplication. The converse argument follows similarly.

\paragraph{Index} 
An index on an attribute $a$ is a data structure 
that speeds up the retrieval of tuples with a given value of $a$~\cite[Ch. 8]{dbSysBook}. 

To reason about rewrite rules that use indexes, we follow the idea that index can
be treated as a logical relation rather than physical data structure from
Tsatalos et al~\cite{TsatalosSI94}. Since defining index as an relation
requires a unique identifier of each tuple (analogous to a pointer to each
tuple in the physical implementation of an index in database systems), we
define index as a \sem query that projects on the a key of the relation and the
index attribute. For example, if $k$ is a key of relation $R$, an index $I$ of
$R$ on attribute $a$ can be defined as:
\[
\texttt{index}(a, R) := \texttt{SELECT $k$, $a$ FROM $R$}
\]

\noindent In Section~\ref{sec:index-rule}, we show example rewrite rules that utilize 
indexes that are commonly used in query optimizers.

%%% Local Variables:
%%% mode: latex
%%% TeX-master: "paper"
%%% End:

\section{\sys: A Verified System for Proving Rewrite Rules}

\label{sec:rules}

To demonstrate the effectiveness of \sem, we implement \sys, a system written in Coq for checking the equivalence of SQL rewrite rules.  \sys consists of four parts, 1) the denotational semantics of \sem, 2) a library consisting of lemmas and tactics that can be used as building blocks for constructing proofs of arbitrary rewrite rules, 3) a fully automated decision procedure for the equivalence of rewrite rules consisting only of conjunctive queries, and 4) a number of proofs of existing rewrite rules from the database literature and real world systems. 
 
\sys relies on the Homotopy Type Theory Coq library~\cite{HoTTCoq}. Its trusted code base contains 296 lines of specification of \sem. Its verified part contains 405 lines of library code (including the decision procedure for conjunctive queries), and 1094 lines of code that prove well known SQL rewrite rules.  

In the following sections, we first show various rewrite rules and the lemmas they use from the \sys library, and then explain our automated decision procedure.

\subsection{Proving Rewrite Rules in \sys by Examples}

We proved 23 rewrite rules from both the database literature and real world 
optimizers using \sem. Figure~\ref{fig:rules} shows the number of 
rewrite rules that we proved in each category and the average lines of code 
(LOC) required per proof. 
  
 \begin{figure} 
  \centering
  \begin{tabular}{llll}
   Category & No. of rules  & Avg. LOC (proof only) \\ \hline
    Basic             & 8  & 11.1              \\
    Aggregation       & 1  & 50              \\ 
    Subquery          & 2  & 17            \\
    Magic Set         & 7  & 30.3          \\ 
    Index             & 3  & 64              \\
    Conjunctive Query & 2  & 1 (automatic)  \\
    \textbf{Total}    & 23 &   25.2             \\
  \end{tabular}
  \caption{Rewrite rules proved}
  \label{fig:rules}
\end{figure}

The following sections show a sampling of interesting rewrite rules in these categories.
Sec~\ref{sec:basic-rewrite} shows how two basic rewrite rules are proved. 
Sec~\ref{sec:agg-rule} shows how to prove a rewrite rule involving aggregation. 
Sec~\ref{sec:magic} shows how to prove the magic set rewrite rules.
Sec~\ref{sec:index-rule} shows how to state a rewrite rule involving indexes. 
 
\subsubsection{Basic Rewrite Rules}
\label{sec:basic-rewrite}

Basic rewrites are simple rewrite rules that are fundamental building blocks 
of the rewriting system. These rewrites are also very effective in terms of
reducing query execution time. We demonstrate how to prove the correctness of
basic rewrite rules in \sys using two examples: selection push down and
commutativity of joins.

\paragraph{Selection Push Down} Selection push down moves a selection (filter)
directly after the scan of the input table to dramatically reduce the amount of
data in the execution pipeline as early as possible. It is known as one of most
powerful rules in database optimizers~\cite{dbSysBook}.
We formulate selection push down as the following rewrite rule in \inputLang:

\begin{small}
\[
\begin{array}{ll}
\denoteQuery{\Context}{\texttt{SELECT * FROM  $R$ WHERE $p_1$ AND $p_2$ }}{\schema} & \equiv \\
\lbr \gvd \texttt{SELECT * FROM} & \\
  \multicolumn{2}{l}{  \qquad \texttt{(SELECT  * FROM $R$ WHERE $p_1$) WHERE $p_2$ : $\schema$}  \rbr}  \\
\end{array}
\] 
\end{small}

This will be denoted to:

\begin{small}
\[
\begin{array}{ll}
\intros \D{p_1} \mkPair{\context}{\tuple} \times \lbr p_2 \rbr \; \mkPair{\context}{\tuple} \times \D{R} \; g \; t  & \equiv \\
\intros \D{p_2} \; \mkPair{\context}{\tuple} \times ( \D{p_1} \; \mkPair{\context}{\tuple} \times \D{R} \; g \; t)  \\ 
\end{array}
\]
\end{small}

The proof proceeds by functional extensionality \footnote{Function
extensionality is implied by the Univalence Axiom.} and the associativity and
commutativity of $\times$.

\paragraph{Commutativity of Joins} Commutativity of joins allows an optimizer to rearrange 
the order of joins in order to get the join order with best performance. This is one
of the most fundamental rewrite rules that almost every optimizer uses. We formulate the
commutativity of joins in \inputLang as follows:

\begin{small}
\[
\begin{array}{ll} 
\denoteQuery{\Context}{\texttt{SELECT * FROM $R$, $S$}}{\Pair{\schema_R}{\schema_S}} \quad \equiv \\
\denoteQuery{\Context}{\texttt{SELECT Right.Right.*, Right.Left.* FROM $S$, $R$}}{\Pair{\schema_R}{\schema_S}}
\end{array}
\]
\end{small}

Note that the select clause flips the tuples from $S$ and $R$, such that the order of the tuples matches the original query.
This will be denoted to:

\begin{small}
\[
\begin{array}{ll}
\intros \D{R} \; \context \; \fst{\tuple} \times \D{S} \; \context \; \snd{\tuple}  \quad \equiv \\
\intros \sum_{t_1} \D{S} \; \context \; \fst{t_1} \times \D{R} \; \context \; \snd{t_1} \times ((\snd{t_1},\fst{t_1}) = t) 
\end{array}
\]
\end{small}

The proof uses Lemma~\ref{lem:sum_pair} provided by the \sys library.

\begin{lemma}
\label{lem:sum_pair}
Let $A$ inhabit $\Type$, and have $P: A \rightarrow \Type$ be a type family, then we have:
\[
  \sum_{x:A \times B} P \; x = \sum_{x: B \times A} P \; (\snd{x}, \fst{x})
\]
\end{lemma}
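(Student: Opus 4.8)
The plan is to prove the statement as an equivalence of types and then invoke the Univalence Axiom to promote that equivalence to the genuine equality on $\Type$ used throughout the paper. The right-hand sum is just the left-hand sum reindexed along the map that swaps the two components of the base, so the whole claim is an instance of the standard fact that a dependent sum is invariant, up to equivalence, under reindexing its base along an equivalence: for $e : C \simeq D$ and a family over $D$ one has $\sum_{d:D} P\;d \simeq \sum_{c:C} P\;(e\;c)$.

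First I would exhibit the swap equivalence $s : B \times A \rightarrow A \times B$ defined by $s\;x \teq \mkPair{\snd{x}}{\fst{x}}$, with inverse $s^{-1} : A \times B \rightarrow B \times A$, $s^{-1}\;y \teq \mkPair{\snd{y}}{\fst{y}}$. Both round trips are the identity by computation: projecting the explicit pair built by $s$ (resp. $s^{-1}$) reduces definitionally, and the $\eta$-rule for products closes the remaining step. Instantiating $e$ in the reindexing lemma to $s$, the target sum becomes $\sum_{y : B\times A} P\;(s\;y) = \sum_{y : B \times A} P\;\mkPair{\snd{y}}{\fst{y}}$, which is exactly the right-hand side of the lemma.

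Concretely, the equivalence on total spaces sends $\mkPair{x}{p}$ with $x : A \times B$ and $p : P\;x$ to $\mkPair{s^{-1}\;x}{p}$. This typechecks without any transport: the required fiber is $P\;\mkPair{\snd{(s^{-1}\;x)}}{\fst{(s^{-1}\;x)}}$, and since $s^{-1}\;x = \mkPair{\snd{x}}{\fst{x}}$ its projections reduce so that this fiber equals $P\;x$ on the nose, letting us reuse $p$ verbatim. The inverse map is symmetric, and both composites are the identity by the same computation rules. A final application of univalence converts the resulting $\simeq$ into the stated $=$.

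The step that needs the most care is precisely this on-the-nose matching of fibers: the proof is clean only because the swap is defined by projecting explicit pairs, so $\snd{(s^{-1}\;x)}$ and $\fst{(s^{-1}\;x)}$ compute and no coherence obligation survives. Were those projections not to reduce, one would instead have to transport $p$ along the path $x = \mkPair{\snd{(s^{-1}\;x)}}{\fst{(s^{-1}\;x)}}$ and discharge the attendant naturality, which is the usual source of friction in such $\sum$-reindexing arguments; in the HoTT Coq library this is either definitional or packaged inside the reindexing combinator, so in practice the whole lemma reduces to supplying the swap equivalence and appealing to univalence.
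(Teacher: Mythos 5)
Your proof is correct: exhibiting the swap map $s\;x \teq \mkPair{\snd{x}}{\fst{x}}$ as an equivalence, reindexing the dependent sum along it, and then applying univalence to turn the resulting $\simeq$ into $=$ is exactly the right argument, and you correctly isolated the one delicate point (that the projections of the explicit pair reduce, with $\eta$ for pairs closing the gap, so no transport is needed). Note that the paper itself gives no proof of this lemma --- it is stated as a fact supplied by the \sys library --- so there is nothing to compare against beyond noting that your argument is the canonical mechanization, matching how such $\Sigma$-reindexing facts are packaged in the HoTT Coq library the paper builds on.
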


Together with the fact that $t_1 = (\fst{t_1}, \snd{t_1})$, the rewrite rule's right hand side becomes:

\begin{small}
\[  \intros \sum_{t'} \D{S} \; \context \; \snd{t'} \times \D{R} \; \context \; \fst{t'} \times (t' = t ) \]
\end{small}

The proof then uses Lemma~\ref{lem:pair_eq} provided by the \sys library.

\begin{lemma}
\label{lem:pair_eq}
Let $A$ and $B$ inhabit $\Type$, and have $P: A \times B \rightarrow \Type$, then we have:
\[
 P\; x = \sum_{x'} P \; x' \times (x' = x)
\]
\end{lemma}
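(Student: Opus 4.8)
The plan is to prove this as an equality of univalent types, which by the Univalence Axiom (already assumed in \sys) reduces to constructing an equivalence $P\,x \simeq \sum_{x'} P\,x' \times (x'=x)$. Note first that the product structure $A\times B$ is irrelevant here: $x$ is simply a point of an arbitrary type and $(x'=x)$ denotes its identity (path) type, so the statement is the familiar fact that summing a type family against ``all paths into a fixed point'' recovers the fiber at that point.

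First I would exhibit the two maps explicitly. The backward map $g : P\,x \to \sum_{x'} P\,x' \times (x'=x)$ sends $b$ to $(x, (b, \mathsf{refl}_x))$. The forward map $f$ sends $(x', (a, q))$ to $\mathsf{transport}^{P}\,q\,a$, i.e.\ it transports $a : P\,x'$ along the path $q : x'=x$ to land in $P\,x$. That $f$ and $g$ are mutually inverse follows by path induction on $q$: when $q$ is $\mathsf{refl}$, transport is the identity and the round trips reduce definitionally, so both homotopies $f\circ g \sim \mathsf{id}$ and $g\circ f\sim\mathsf{id}$ hold.

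A cleaner route, and the one I would actually implement in \sys, is to avoid writing the inverse by hand and instead reassociate the $\Sigma$/$\times$ structure:
\[
 \sum_{x'} \bigl(P\,x' \times (x'=x)\bigr) \;\simeq\; \sum_{(x',q)\,:\,\sum_{x'}(x'=x)} P\,x'.
\]
The base $\sum_{x'}(x'=x)$ is the type of paths ending at $x$, which is contractible (contractibility of singletons) with center $(x,\mathsf{refl}_x)$. Since a $\Sigma$-type over a contractible base is equivalent to the fiber over its center, the right-hand side collapses to $P\,x$. In the HoTT Coq library this is essentially the composition of the ``based paths are contractible'' lemma with the equivalence ``$\Sigma$ over a contractible type is its fiber'', yielding a one- or two-line mechanized proof.

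I expect the only real content to be the contractibility of the based path space (equivalently, the single application of path induction that proves $f$ and $g$ inverse); everything else is bookkeeping---reassociating $\Sigma$ against $\times$ and invoking univalence to turn the resulting equivalence back into the stated type equality. There is no genuine obstacle, but the step most prone to error when mechanizing is threading the transport through the section/retraction correctly, which is exactly what delegating to singleton contractibility sidesteps.
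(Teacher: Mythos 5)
Your proof is correct. The paper gives no proof of this lemma at all---it is simply cited as a fact provided by the \sys library---so there is no paper argument to diverge from; both of your routes (the explicit pair of inverse maps $b \mapsto (x,(b,\mathsf{refl}_x))$ and $(x',(a,q)) \mapsto \mathsf{transport}^{P}\,q\,a$ verified by path induction, and the reduction via reassociation to contractibility of the based path space $\sum_{x'}(x'=x)$) are the canonical HoTT argument that such a library proof would use, and your observation that the product structure on $A \times B$ plays no role (the lemma is applied to pairs only because tuples produced by \texttt{FROM} clauses are pairs) is also accurate.
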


After applying Lemma~\ref{lem:pair_eq}, the right hand side becomes the following,
and we can finish the proof by applying commutativity of $\times$: 
\begin{small}
\[
\intros \D{S} \; \context \; \snd{\tuple} \times \D{R} \; \context \; \fst{\tuple}
\]
\end{small}

\subsubsection{Aggregation and Group By Rewrite Rules} 
\label{sec:agg-rule}

Aggregation and Group By are widely used in analytic queries \cite{ChaudhuriD97SIGMOD}. The standard data analytic benchmark TPC-H \cite{tpch} has 16 queries with group by and 21 queries with aggregation out of a total of 22 queries.  Following is an example rewrite rule for aggregate queries. 
The query on the left-hand side groups the relation $R$ by the column $k$, sums
all values in the $b$ column for each resulting partition, and then removes all
results except the partition whose column $k$ is equal to the constant $l$.
This can be rewritten to the faster query that first removes all tuples from
$R$ whose column $k \neq l$, and then computes the sum.

\begin{small}
\[
\begin{array}{ll}
\lbr \gvd \texttt{SELECT * FROM (SELECT $k$, SUM($b$) FROM $R$ GROUP BY $k$)}  & \\ 
\qquad \; \texttt{WHERE $(\Var{k}) = l$} : \schema \rbr \quad \equiv \\ 
\denoteQuery{\Context}{\texttt{SELECT $k$, SUM($b$) FROM $R$ WHERE $(\Var{k})=l$ GROUP BY $k$ }}{\schema}
\end{array}
\]
\end{small}

As shown in Sec.~\ref{sec:more-sql}, we use a correlated subquery and a unary
aggregate function (which takes a \inputLang query as its input) to represent
aggregation on group by SQL queries. After de-sugaring, the group by query 
becomes \texttt{SELECT DISTINCT $\ldots$}. The rule will thus be denoted to:

\begin{small}
\[
\begin{array}{ll}
\intros (\fst{\tuple} = \denote{l}) \times  \| \sum_{t_1} \denote{R} \; t_1 \times 
 (\fst{\tuple} = \denote{k} \; t_1) \times \\
\qquad \quad (\snd{\tuple} = \denote{\texttt{SUM}} \; (\lambda \; t'. \; \sum_{t_2} (\denote{k} \; t_1 = \denote{k} \; t_2)  \times \\
\qquad \qquad \quad \denote{R} \; t_2 \times (\denote{b} \; t_1  = t')) \| \\
\quad \equiv \quad \\
\intros \| \sum_{t_1} (\denote{k} \; t_1 = \denote{l}) \times \denote{R} \; t_1 \times 
 (\fst{\tuple} = \denote{k} \; t_1) \times \\
\qquad \quad (\snd{\tuple} = \denote{\texttt{SUM}} \; (\lambda \; t'. \; \sum_{t_2} 
(\denote{k} \; t_1 = \denote{k} \; t_2) \times (\denote{k} \; t_2 = \denote{l})
\times \\
\qquad \qquad \quad \denote{R} \; t_2 \times  (\denote{b} \; t_1 = t')) \|
\end{array}
\]
\end{small}

The proof proceeds by functional extensionality, after which both sides become squash types. 
The proof then uses the fundamental lemma about squash types, where 
for all squash types $A$ and $B$,
$(A \leftrightarrow B) \Rightarrow (A = B)$.
It thus suffices to prove by cases the bi-implication ($\leftrightarrow$) of both sides. 
In both cases, instantiate $t_1$ with $t_1$ ($t_1$ is the witness of the $\Sigma$ hypothesis).
It follows that $t.1 = \denote{l} = \denote{k} \; t_1$, and thus that $\denote{k} \; t_2 = \denote{l}$ inside $\texttt{SUM}$.

\subsubsection{Magic Set Rewrite Rules}
\label{sec:magic}  

Magic set rewrites are well known rewrite rules that were originally used in
the recursive query processing in deductive databases \cite{BancilhonMSU86PODS,
RohmerLK86NGC}. It was then used for rewriting complex decision support queries
and has been implemented in commercial systems such as IBM's DB2 database
\cite{SeshadriHPLRSSS96SIGMOD, MumickFPR90SIGMOD}. Below is an example of a
complex magic set rewrite from \cite{SeshadriHPLRSSS96SIGMOD}. 

\noindent Original Query:

\begin{small}
\[
\begin{array}{ll}
\texttt{CREATE VIEW $DepAvgSal$ AS} \\
\qquad \texttt{(SELECT $E.did$, AVG($E.sal$) AS $avgsal$ FROM $Emp$ $E$} \\
\qquad \texttt{GROUP BY $E.did$);}  \\
\texttt{SELECT $E.eid$, $E.sal$ FROM $Emp$ $E$, $Dept$ $D$, $DepAvgSal$ $V$ } \\
\texttt{WHERE $E.did = D.did$ AND $E.did = V.did$ AND $E.age < 30$} \\
\qquad \texttt{AND $D.budget > 100000$ AND $E.sal > V.avgsal$}\\
\end{array}
\]
\end{small}

\noindent Rewritten Query:

\begin{small}
\[
\begin{array}{ll}
\texttt{CREATE VIEW $PartialResult$ AS} \\
\qquad \texttt{(SELECT $E.eid$, $E.sal$, $E.did$ FROM $Emp$ $E$, $Dept$ $D$}\\
\qquad \; \texttt{WHERE $E.did = D.did$ AND $E.age < 30$ AND} \\
\qquad \qquad \texttt{$D.budget > 100000$);}\\
\texttt{CREATE VIEW $Filter$ AS} \\
\qquad \texttt{(SELECT DISTINCT $P.did$ FROM $PartialResult$ $P$);}\\
\texttt{CREATE VIEW $LimitedDepAvgSal$ AS} \\
\qquad \texttt{(SELECT $F.did$, AVG($E.sal$) AS $avgsal$} \\
\qquad \; \texttt{FROM $Filter$ $F$, $Emp$ $E$} \\
\qquad \; \texttt{WHERE $E.did = F.did$} \\
\qquad \; \texttt{GROUP BY $F.did$);}    \\
\texttt{SELECT $P.eid$, $P.sal$} \\ 
\texttt{FROM $PartialResult$ $P$, $LimitedDepAvgSal$ $V$} \\
\texttt{WHERE $P.did = V.did$ AND $P.sal > V.avgsal$}\\
\end{array}
\]
\end{small}

This query aims to find each young employee in a big department ($D.budget > 100000$) whose salary is higher then the average salary in her department. Magic set rewrites use the fact that only the average salary of departments that are big and have young employees need to be computed. As described in \cite{SeshadriHPLRSSS96SIGMOD}, all magic set rewrites can be composed from just three basic rewrite rules on semijoins, namely introduction of $\theta$-semijoin, pushing $\theta$-semijoin through join, and pushing $\theta$-semijoin through aggregation.

Following, we show how to state all three rewrite rules using \sys, and show how to prove two.
We firstly define $\theta$-semijoin as a syntactic rewrite in \sem:

\begin{small}
\[
\begin{array}{ll}
\texttt{$A$ SEMIJOIN $B$ ON $\theta$} \teq \\ 
\texttt{SELECT * FROM $A$ WHERE EXISTS (SELECT * FROM $B$ WHERE $\theta$)}
\end{array}
\]
\end{small}

\paragraph{Introduction of $\theta$-semijoin}
This rules shows how to introduce semijoin from join and selection. 
Using semijoin algebra notation, this rewrite can be expressed as follows:

\[
R_1 \join_{\theta} R_2 \equiv R_1 \join_{\theta} (R_2 \semijoin_{\theta} R_1)
\]

\noindent Using \inputLang, the rewrite can be expressed as follows:

\begin{small}
\[
\begin{array}{ll}
\denoteQuery{\Context}{\texttt{SELECT * FROM $R_2$, $R_1$ WHERE $\theta$}}{\schema} & \quad \equiv \quad \\
\multicolumn{2}{l}{\denoteQuery{\Context}{\texttt{SELECT * FROM ($R_2$ SEMIJOIN $R_1$ ON $\theta$), $R_1$ WHERE $\theta$ }}{\schema}}
\end{array}
\]
\end{small}

\noindent which is denoted to:

\begin{small}
\[
\begin{array}{ll}
 \intros \denote{\theta} \; \mkPair{\context}{\tuple} \times \denote{R_2} \; \context \; \fst{\tuple} \times \denote{R_1} \; \context \; \snd{\tuple} & \quad \equiv \quad \\
\intros \denote{\theta} \; \mkPair{\context}{\tuple} \times \denote{R_2} \; \context \; \fst{\tuple} \times \denote{R_1} \; \context \; \snd{\tuple} \times{} \\
\qquad \quad \merely{\sum_{t_1}\denote{\theta} \; \mkPair{\context}{\mkPair{\fst{\tuple}}{t_1}} \times \denote{R_1} \; \context \; t_1}
\end{array}
\]
\end{small}

\noindent The proof uses Lemma~\ref{lem:hprop_prod} provided by the \sys library.

\begin{lemma}
\label{lem:hprop_prod}
$\forall P, T: \Type$, where $P$ is either $\zero$ or $\one$,  we have:
\[ (T \rightarrow P) \Rightarrow ((T \times P) = T) \]
\end{lemma}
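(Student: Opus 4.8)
The plan is to prove the type equality $T \times P = T$ by appealing to univalence: it suffices to exhibit an equivalence $T \times P \simeq T$, since univalence then upgrades such an equivalence to a propositional equality of types. The natural forward map is the first projection, $\pi \teq \lambdaFn{x}{\fst{x}} : T \times P \to T$. For the inverse I would use the hypothesis $f : T \to P$ to reconstruct the discarded second component, defining $g \teq \lambdaFn{t}{(t, f\,t)} : T \to T \times P$. Thus the whole proof reduces to checking that $\pi$ and $g$ are mutually inverse.

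One composite is immediate: $\pi(g\,t) = \fst{(t, f\,t)} \equiv t$ holds definitionally, so $\pi \circ g$ is the identity on $T$. The other composite, $g \circ \pi$, requires showing $(\fst{x},\, f\,\fst{x}) = x$ for every $x : T \times P$; writing $x$ as the pair $(\fst{x}, \snd{x})$, this reduces to the single equality $f\,\fst{x} = \snd{x}$ in $P$. Here the hypothesis that $P$ is a squash type is exactly what is needed: since $P$ is either $\zero$ or $\one$, it is a mere proposition, so any two of its inhabitants are equal, and in particular $f\,\fst{x} = \snd{x}$. This supplies the second homotopy, so $\pi$ is an equivalence and univalence yields $T \times P = T$.

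The main obstacle is conceptual rather than computational: the statement asserts a genuine equality of \emph{types}, not merely equal cardinalities or a logical bi-implication, so the argument must pass through univalence and the explicit equivalence above rather than through a cardinality calculation. The one nontrivial ingredient is recognizing that propositionality of $P$ — guaranteed by its being a squash type — is precisely what forces the two candidate second components to coincide, making $\pi$ an equivalence. A more computational alternative would case-split on whether $P$ equals $\zero$ or $\one$: when $P = \one$ the equality $T \times \one = T$ is standard, and when $P = \zero$ the map $f : T \to \zero$ shows $T$ is uninhabited, whence $T = \zero$ and both $T \times P$ and $T$ equal $\zero$. I would present the equivalence-based proof, as it is shorter and avoids appealing to any decision procedure for squash types.
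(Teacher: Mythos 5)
Your proof is correct, but it takes a genuinely different route from the paper's. The paper argues by case analysis on $T$: if $T$ is inhabited, the hypothesis $T \rightarrow P$ forces $P$ to be $\one$, and $T \times \one = T$; if $T = \zero$, then $\zero \times P = \zero$. The paper itself flags this as intuition (``Intuitively, this can be proven by cases on $T$''), and indeed it cannot be taken literally in a univalent setting: deciding between ``$T$ is inhabited'' and ``$T = \zero$'' for an arbitrary type $T$ is an instance of excluded middle, which is not constructively available. Your argument sidesteps this entirely: you exhibit the quasi-inverse pair $\pi : T \times P \to T$ and $g \teq \lambdaFn{t}{(t, f\,t)}$, reduce the only nontrivial homotopy to the equality $f(\fst{x}) = \snd{x}$ in $P$, and discharge it by proof irrelevance for the mere proposition $P$ --- which is precisely where the squash-type hypothesis earns its keep --- and then apply univalence. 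This is fully constructive and is essentially what a mechanized Coq/HoTT proof of this lemma must do, so your version is arguably closer to the paper's actual formalization than to its prose proof. What the paper's phrasing buys is brevity and a cardinality-level reading; what yours buys is an honest proof term. Your alternative route (casing on $P = \zero$ versus $P = \one$, which the lemma's hypothesis does license, and using $T \to \zero$ to conclude $T = \zero$ in the empty case) is the constructive repair of the paper's case split, with the cases taken on $P$ rather than on $T$.
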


\begin{proof}
Intuitively, this can be proven by cases on $T$.
If $T$ is inhabited, then $P$ holds by assumption, and $T \times \one = T$.
If $T = 0$, then $\zero \times P = \zero$.
\end{proof}

Using this lemma, it remains to be shown that 
$\denote{\theta} \; \mkPair{\context}{\tuple}$ and $\denote{R_2} \; \context \; \fst{\tuple}$ and $\denote{R_1} \; \context \; \snd{\tuple}$ 
imply 
$\merely{\sum_{t_1}\denote{\theta} \; \mkPair{\context}{\mkPair{\fst{\tuple}}{t_1}} \times \denote{R_1} \; \context \; t_1}$.
We show this by instantiating $t_1$ with $t.2$, and then by hypotheses.

\paragraph{Pushing $\theta$-semijoin through join}
The second rule in magic set rewrites is 
the rule for pushing $\theta$-semijoin through join, represented in semijoin algebra as:

\[
(R_1 \join_{\theta_1} R_2) \semijoin_{\theta_2} R_3 \equiv (R_1 \join_{\theta_1}R_2') \semijoin_{\theta_2} R_3 
\]

\noindent where $R_2' = E_2 \semijoin_{\theta_1 \land \theta_2} (R_1 \join R_3)$.
This rule can be written in \inputLang as below:

% \sc{explain cast pred}

\begin{small}
\[
\begin{array}{ll}
\intros \lbr \gvd \texttt{(SELECT * FROM $R_1$, $R_2$ WHERE $\theta_1$)} \\ 
\qquad \qquad \quad \texttt{SEMIJOIN $R_3$ ON $\theta_2$}:\Pair{\schema_1}{\schema_2} \rbr  \quad \equiv \quad \\
\intros \lbr \gvd \texttt{(SELECT *} \\
\qquad \qquad \quad \texttt{ FROM $R_1$, ($R_2$ SEMIJOIN (FROM $R_1$, $R_3$) ON $\theta_1$ AND $\theta_2$)} \\
\qquad \qquad \quad \texttt{ WHERE $\theta_1$) SEMIJOIN $R_3$ ON $\theta_2$} :\Pair{\schema_1}{\schema_2} \rbr
\end{array} 
\]
\end{small}

\noindent The rule is denoted to:

\begin{small}
\[
\begin{array}{ll}
% left side
\intros \merely{\sum_{t_1} \denote{\theta_2} \; \mkPair{\context}{\mkPair{t}{t_1}} \times \denote{R_3} \; \context \; t_1} \times \\
\qquad \quad \denote{\theta_1} \; \mkPair{\context}{\tuple} \times \denote{R_1} \; \context \; \fst{\tuple} \times \denote{R_2} \; \context \; \snd{\tuple} 
 \qquad \equiv \\
% right side
\intros \merely{\sum_{t_1} \denote{\theta_2} \; \mkPair{\context}{\mkPair{t}{t_1}} \times \denote{R_3} \; \context \; t_1} \times \\
\qquad \quad \denote{\theta_1} \; \mkPair{\context}{\tuple} \times \denote{R_1} \; \context \; \fst{\tuple} \times \denote{R_2} \; \context \; \snd{\tuple} \times{} \\
\qquad \quad \| \sum_{t_1} \denote{\theta_1} \; \mkPair{\context}{\mkPair{\fst{t_1}}{\snd{\tuple}}} \times 

\denote{\theta_2} \; \mkPair{\context}{\mkPair{\mkPair{\fst{t_1}}{\snd{\tuple}}}{\snd{t_1}}} 
\\
\qquad \quad \times \denote{R_1} \; \context \; \fst{t_1} \times \denote{R_3} \; \context \; \snd{t_1} \| 
\end{array}
\]

\end{small}

We can prove this rule by using a similar approach to the one used to prove introduction of $\theta$-semijoin:
rewriting the right hand side using Lemma~\ref{lem:hprop_prod}. 
and then instantiating $t_1$ with $(t.1, t_1)$ ($t_1$ is the witness of the $\Sigma$ hypothesis).

\paragraph{Pushing $\theta$-semijoin through aggregation} 
The final rule is pushing $\theta$-semijoin through aggregation:

\newcommand{\gbf}{_{\bar{g}}\mathcal{F}_{\bar{f}}}

\[
\gbf(R_1) \semijoin_{c_1 = c_2} R_2 \equiv \gbf(R_1 \semijoin_{c_1 = c_2} R_2)
\]

\noindent where $\gbf$ is a grouping/aggregation operator ($\gbf$ was firstly defined in \cite{SeshadriHPLRSSS96SIGMOD}), and $\bar{g}$ denotes the group by attributes and $\bar{f}$ denotes the aggregation function. In this rule, one extra condition is that $c_1$ is from the attributes in $\bar{g}$ and $c_2$ is from the attributes of $R_2$. This rule can be written in \inputLang as below:

\begin{small}
\[
\begin{array}{ll}
\lbr \gvd \texttt{(SELECT $c_1$, COUNT($a$) FROM $R_1$ GROUP BY $c_1$)} \\ 
\qquad \quad  \texttt{SEMIJOIN $R_2$ ON $c_1 = c_2$ } :  \schema \rbr \quad \equiv \\
\lbr \gvd \texttt{SELECT $c_1$, COUNT($a$)} \\
\qquad \quad \texttt{FROM ($R_1$ SEMIJOIN $R_2$ ON $c_1 = c_2$) GROUP BY $c_1$} : \schema \rbr
\end{array}
\]
\end{small}

\noindent We omit the proof here for brevity.

\subsubsection{Index Rewrite Rules}
\label{sec:index-rule}

As introduced in Section~\ref{sec:more-sql}, we define an index as a \sem query that projects on the indexed attribute and the primary key of a relation. Assuming $k$ is the primary key of relation $R$, and $I$ is an index on column $a$:
\[ I := \texttt{SELECT $k$, $a$ FROM $R$} \]

\noindent We prove the following common
rewrite rule that converts a full table scan to a lookup on an index and a join:

\begin{small}
\[
\begin{array}{ll}
\denoteQuery{\Context}{\texttt{SELECT * FROM $R$ WHERE $a = l$ }}{\schema} \quad \equiv \\
\lbr \gvd \texttt{SELECT * FROM $I$, $R$} \\
\qquad \quad \texttt{WHERE $a = l$ AND Right.Left.$k$ = Right.Right.$k$ } : \schema \rbr 
\end{array}
\]
\end{small}

\noindent We omit the proof here for brevity.

\subsection{Automated Decision Procedure for Conjunctive Queries}
\label{sec:automation}

\begin{figure*}
\centering
\begin{tabular}{|l|l|l|l|l|} \hline
 & Containment (Set)  & Containment (Bag) & Equivalence (Set) &  Equivalence (Bag) \\ \hline 
Conjunctive Queries & NP-Complete \cite{ChandraM77STOC} & Open & NP-Complete \cite{ChandraM77STOC}  &  Graph Isomorphism \cite{ChaudhuriV92PODS}  \\ \hline
Union of Conjunctive Queries & NP-Complete \cite{SagivY80JACM}  & Undecidable \cite{IoannidisR95TODS}  & NP-Complete \cite{SagivY80JACM}  & Open \\ \hline 
Conjunctive Query with $\neq$, $\geq$, and $\leq$ & 
$\Pi_2^p$-Complete \cite{Meyden92PODS} & Undecidable  \cite{JayramKV06PODS} &  $\Pi_2^p$-Complete  \cite{Meyden92PODS}  & Undecidable \cite{JayramKV06PODS}  \\ \hline
First Order (SQL) Queries & Undecidable \cite{Trakhtenbrot50DANUSSR} & Undecidable & Undecidable & Undecidable \\ \hline
\end{tabular}
\caption{Complexities of Query Containment and Equivalence}
\label{fig:complexity}
\end{figure*} 

\begin{figure}
\tikzset{
  internode/.style = {shape=circle, draw, align=center}
}
\begin{tikzpicture} 
  \begin{scope}
  \node [internode] (root) {} 
    child {node (n1) {$\schema_{R_1}$}} 
    child {node (n2) {$\schema_{R_2}$}}; 
   \node [text width=6cm, anchor=east] at (6.5, 0) {Schema of $t_1$ (left)};
  \end{scope}
  \begin{scope}[xshift=4cm]
   \node [internode] {} 
    child {node [internode] {} 
      child {node (n3) {$\schema_{R_1}$}} 
      child {node (n4) {$\schema_{R_1}$}} 
    } 
    child {node (n5) {$\schema_{R_2}$}}; 
   \node [text width=6cm, anchor=east] at (6.5, 0) {Schema of $t_1$ (right)};
  \end{scope}
 \draw[->, dashed, blue] (n3)--(n1);
 \draw[->, dashed, blue] (n4)--(n1);
 \draw[->, dashed, blue] (n5)--(n2);
\end{tikzpicture}

\begin{tikzpicture} 
  \begin{scope}
  \node [internode] (root) {} 
    child {node (n1) {$\schema_{R_1}$}} 
    child {node (n2) {$\schema_{R_2}$}}; 
   \node [text width=6cm, anchor=east] at (6.5, 0) {Schema of $t_1$ (left)};
  \end{scope}
  \begin{scope}[xshift=4cm]
   \node [internode] {} 
    child {node [internode] {} 
      child {node (n3) {$\schema_{R_1}$}} 
      child {node (n4) {$\schema_{R_1}$}} 
    } 
    child {node (n5) {$\schema_{R_2}$}}; 
   \node [text width=6cm, anchor=east] at (6.5, 0) {Schema of $t_1$ (right)};
  \end{scope}
 \draw[->, dashed, red] (n1)--(n3);
 \draw[->, dashed, red] (n2)--(n5);
\end{tikzpicture}
\caption{The mappings found to prove the conjunctive query example, blue lines show the mapping found to prove left $\rightarrow$ right, red lines show the mapping found to prove right $\rightarrow$ left. }
\label{fig:mappings}
\end{figure}

The equivalence of two SQL queries is in general undecidable. Figure~\ref{fig:complexity} shows the complexities of deciding containment and equivalence of subclasses of SQL. The most well-known subclass are conjunctive queries, which are of the form \texttt{DISTINCT SELECT $p$ FROM $q$ WHERE $b$}, where $p$ is a sequence of arbitrarily many attribute projections, $q$ is the cross product of arbitrarily many input relations, and $b$ is a conjunct consisting of arbitrarily many equality predicates between attribute projections. 

We implement a decision procedure to automatically prove the equivalence of conjunctive queries in \inputLang. After denoting the \inputLang query to \outputLang,  
the decision procedure automates steps similar to the proof in Section \ref{sec:agg-rule}. 
First, after applying functional extensionality, both sides become squash types due to the \texttt{DISTINCT} clause. The procedure then applies the fundamental lemma about squash types $\forall A B, (A \leftrightarrow B) \Rightarrow (A = B)$. In both cases of the resulting
bi-implication, the procedure tries all possible instantiations of the $\Sigma$, which is due to the \texttt{SELECT} clause. This search for the correct instantiation is implemented using Ltac's built-in backtracking support. The procedure then rewrites all equalities and tries to discharge the proof by direct application of hypotheses. 

The following is an example of two equivalent conjunctive SQL queries that we can solve using our decision procedure:
\begin{small}
\[
\begin{array}{ll}
\texttt{SELECT DISTINCT $x$.$c_1$ FROM $R_1$ AS $x$, $R_2$ AS $y$} \\
\texttt{WHERE $x$.$c_2$ = $y$.$c_3$} & \quad \equiv \\
\texttt{SELECT DISTINCT $x$.$c_1$ FROM $R_1$ AS $x$, $R_1$ AS $y$, $R_2$ AS $z$} \\
\texttt{WHERE $x$.$c_1$ = $y$.$c_1$ AND $x$.$c_2$ = $z$.$c_3$}
\end{array}
\]
\end{small}
The same queries can be expressed in \inputLang as follows, where the schema of $R_i$ is $\sigma_{R_i}$:
\begin{small}
\[
\begin{array}{ll}
\lbr \gvd \texttt{DISTINCT SELECT Right.Left.$c_1$ FROM $R_1$, $R_2$} \\
\qquad \quad \texttt{WHERE Right.Left.$c_2$ = Right.Right.$c_3$} :\schema \rbr \quad  \equiv  \\
\lbr \gvd \texttt{DISTINCT SELECT Right.Left.Left.$c_1$} \\ 
\qquad \quad \texttt{FROM (FROM $R_1$, $R_1$), $R_2$} \\
\qquad \quad \texttt{WHERE Right.Left.Left.$c_1$ = Right.Left.Right.$c_1$ AND }  \\
\qquad \qquad \qquad \texttt{Right.Left.Left.$c_2$ = Right.Right.$c_3$} : \schema \rbr
\end{array}
\]
\end{small}
\noindent which is denoted as:
\begin{small}
\[
\begin{array}{ll}
\intros \| \sum_{t_1} 
\denote{R_1} \; \context \fst{t_1} \times \denote{R_2} \; \context \; \snd{t_1} \times{} \\
\qquad \quad (\denote{c_2} \; \fst{t_1} = \denote{c_3} \; \snd{t_1} ) \times {} \\
\qquad \quad (\denote{c_1} \; \fst{t_1} = \tuple) \times \| \qquad \equiv \\
\intros \| \sum_{t_1} \denote{R_1} \; \context \; \fst{\fst{t_1}} \times \denote{R_1} \; \context \; \snd{\fst{t_1}} \times \denote{R_2} \; \context \; \snd{t_1} \times{} \\
\qquad \quad (\denote{c_1} \; \fst{\fst{t_1}} = \denote{c_1} \; \snd{\fst{t_1}}) \times (\denote{c_2} \; \fst{\fst{t_1}} = \denote{c_3} \; \snd{t_1}) \times{} \\
\qquad \quad (\denote{c_1} \fst{\fst{t_1}} = t) \| 
\end{array}
\]
\end{small}

The decision procedure turns this goal into a bi-implication, which it proves by cases. 
For the $\rightarrow$ case, the decision procedure destructs the available $\Sigma$ witness into
tuple $t_x$ from $R_1$ and $t_y$ from $R_2$ and tries all instantiations of $t_1$ using these tuples. The instantiation $t_1 = ((t_x, t_x), t_y)$ allows the procedure to complete the proof after rewriting all equalities. 
For the $\leftarrow$ case, the available tuples are $t_x$ from $R_1$, $t_y$ from $R_1$, and $t_z$ from $R_2$. 
The instantiation $t_1 = (t_x, t_z)$ allows the procedure to complete the proof after rewriting all equalities. This assignment is visualized in Figure~\ref{fig:mappings}.

%%% Local Variables:
%%% mode: latex
%%% TeX-master: "paper"
%%% End:

\section{Related Work}
\label{sec:related}

\subsection{Query Rewriting}
Query rewriting based on equivalence rules is an essential part of modern query optimizers. Rewrite rules are either fired by a forward chaining rule engine in a Starburst optimizer framework \cite{HaasFLP89SIGMOD,PiraheshHH92SIGMOD}, or are used universally to represent the search space in Exodus \cite{GraefeD87SIGMOD} and its successors, including Volcano \cite{GraefeM93ICDE} and Cascades \cite{Graefe95aDEB}. 

Using \sys, we formally prove a series of rewrite rules from the database literature. Those rules include basic algebraic rewrites such as selection push down \cite{Ullman89}, rewrite rules using indexes \cite{dbSysBook}, and unnesting aggregates with joins \cite{Muralikrishna92VLDB}. We are able to prove one of the most complicated rewrite rules that is also widely used in practice: magic set rewrites \cite{MumickFPR90SIGMOD, BancilhonMSU86PODS, SeshadriHPLRSSS96SIGMOD}. Magic set rewrites involve many SQL features such as correlated subqueries, aggregation and group by. To our best knowledge, its correctness has not been formally proven before. 

\sys automates proving rewrite rules on decidable fragments of SQL. According to Codd's theorem \cite{Codd72}, relational algebra and relational calculus (formulas of first-order logic on database instances) are equivalent in expressive power. Thus, the equivalence between two SQL queries is in general undecidable \cite{Trakhtenbrot50DANUSSR}. Extensive research has been done to study the complexity of containment and equivalence of fragments of SQL queries under bag semantics and set semantics \cite{ChandraM77STOC, ChaudhuriV92PODS, IoannidisR95TODS,  DBLP:conf/icdt/GeckKNS16, JayramKV06PODS, SagivY80JACM, Meyden92PODS}. We list the results in Figure~\ref{fig:complexity}.

\subsection{SQL Semantics} 
SQL is the de-facto language for relational database systems. Although the SQL language is an ANSI/ISO standard \cite{sql2011}, it is loosely described in English and leads to conflicting interpretations \cite{Date89AW}. Previous related formalizations of various fragments of SQL include relational algebra \cite{TheAliceBook}, comprehension syntax \cite{BunemanLSTW94SIGMOD}, and recursive and non-recursive Datalog \cite{ChaudhuriV92PODS}. These formalisms are not suited for rigorous reasoning about the correctness of real world rewrite rules since they mostly focus exclusively on set semantics. In addition, in order to express rewrite rules in these formalism, non-trivial transformation from SQL are required.

Previous SQL formalizations in proof systems include \cite{MalechaMSW10POPL, BenzakenCD14ESOP, VeanesGHT09ICFEM, VeanesTH10LAPR16}. In \cite{VeanesGHT09ICFEM, VeanesTH10LAPR16}, SQL semantics are encoded in the Z3 SMT solver for test generation.  In \cite{MalechaMSW10POPL}, an end to end verified prototype database system is implemented in Coq. In \cite{BenzakenCD14ESOP}, a relational data model and relational algebra are implemented in Coq. Compared with \cite{MalechaMSW10POPL, BenzakenCD14ESOP}, \sem covers all important SQL feature such as bags, aggregation, group by, indexes, and correlated subqueries. As a result, we are able to express a wide range of rewrite rules. Unlike \cite{MalechaMSW10POPL}, we did not build an end to end formally verified database system. 

In \sem, SQL features like aggregation on group by and indexes are supported through syntactic rewrites. Rewriting aggregation on group by using aggregation on relations and correlated subqueries is based on \cite{BunemanLSTW94SIGMOD}. We use logical relation to represent indexes in \sem. This was firstly proposed by Tastalos et al \cite{TsatalosSI94}. 

\subsection{Related Formal Semantics in Proof Systems} 

In the past decades, a number of formal semantics in different application domains were developed using proof systems for software verification. The CompCert compiler \cite{Leroy09JACM} specifies the semantics of a subset of C in Coq and provides machine checkable proofs of the correctness of compilation. HALO denotes Haskell to first-order logic for static verification of contracts \cite{VytiniotisJCR13POPL}.  
Bagpipe \cite{WeitzWTEKT2016:TR} developed formal semantics for the Border Gateway Protocol (BGP) to check the correctness of BGP configurations. SEL4 \cite{SEL4} formally specifies the functional correctness of an OS kernel in Isabelle/HOL and developed a verified OS kernel. FSCQ \cite{FSCQ} builds a crash safe file system using an encoding of crash Hoare logic in Coq. With formal semantics in proof systems, there are more verified system developed such as Verdi \cite{WilcoxWPTWEA15PLDI}, Verve \cite{YangH10PLDI}, Bedrock \cite{Chlipala13} and Ironclad \cite{HawblitzelHLNPZZ14OSDI}.

%%% Local Variables:
%%% mode: latex
%%% TeX-master: "paper"
%%% End:

\section{Discussion}
\label{sec:discussion}

{\bf Limitations} Our system does currently not support three SQL
features: NULL's with their associated three-valued-logic, outer
joins, and windows functions.  However, all can be expressed in \sem,
at the cost of some added complexity, as we explain now.

When any argument to an expression is NULL, then the expression's
output is NULL; this feature can easily be supported by modifying the
external operators.  When an argument of a comparison predicate is
NULL, then the resulting predicate has value \texttt{unknown}, and SQL
uses three valued logic to compute predicates: it defines
$0 = \texttt{false}, 1/2 = \texttt{unknown}, 1=\texttt{true}$, and the
logical operators $x \texttt{ and } y = \min(x,y)$,
$x \texttt{ or } y = \max(x,y)$, $\texttt{not}(x) = 1-x$; a
select-from-where query returns all tuples for which the
where-predicate evaluates to \texttt{true} (i.e. not \texttt{false} or
\texttt{unknown}).  As a consequence, the law of excluded middle
fails, for example the query:
$$\SelectFromWhere{\Star}{R}{a=5 \texttt{ or } a \neq 5}$$
is not equivalent to $\SelectFrom{\Star}{R}$.  This, too, could be
currently expressed \sem\ by encoding the predicates as external
functions that implement the 3-valued logic.  However, by doing so one
hides from the rewrite rules the equality predicate, which plays a key
role in joins.  In future versions, we plan to offer native support
for NULL's, to simplify the task of proving rewrite rules over
relations with NULLs.

Both outer joins and windows functions are directly expressible in
\sem.  For example, a left outer join of two relations $R(a,b)$,
$S(b,c)$ can be expressed by first joining $R$ and $S$ on $b$, and
union-ing the result with
\[
\begin{array}{ll}
\SelectFrom{R.*,\texttt{NULL}}{S} \texttt{ EXCEPT } \\
\SelectFromWhere{R.*,\texttt{NULL}}{S}{R.b=S.b} 
\end{array}
\] 
A direct
implementation in \sem\ would basically have to follow the same
definition of left outer joins.

{\bf Finite v.s. Infinite Semantics} Recall that our semantics extends
the standard bag semantics of SQL in two ways: we allow a relation to
have infinitely many distinct elements, and we allow each element to
have an infinite multiplicity.  To the best of our knowledge, our
system is the first that interprets SQL over infinite relations.  This
has two consequences. First, our system cannot check the equivalence
of two SQL expressions that return the same results on all finite
relations, but differ on some infinite relations.  It is well-known
that there exists First Order sentences, called {\em infinity axioms},
that do not admit any finite model, but admit infinite models.  For
example \cite[pp.307]{DBLP:books/sp/BorgerGG1997} the sentence
$\varphi \equiv \forall x \exists y \forall z(\neg R(x,x) \wedge
R(x,y) \wedge (R(y,z) \rightarrow R(x,z)))$ is an infinity axiom.  It
is possible to write a SQL query that checks $\varphi$, then returns
the empty set if $\varphi$ is false, or returns a set consisting of a
single value (say, 1) if $\varphi$ is true: call this query $Q_1$.
Call $Q_2$ the query \SelectFromWhere{\texttt{DISTINCT} 1}{R}{2=3}.
Then $Q_1=Q_2$ over all finite relations, but $Q_1 \neq Q_2$ not over
infinite relations.  Thus, one possible disadvantage of our semantics
is that we cannot prove equivalence of queries that encode infinity
axioms.  However, none of the optimization rules that we found in the
literature, and discussed in this paper, encode an infinity axiom.
Hence we argue that, for practical purposes, extending the semantics
to infinite relations is a small price to pay for the added simplicity
of the equivalence proofs.  Second, by generalizing SQL queries to
both finite and infinite relations we make our system theoretically
complete: if two queries are equivalent then, by G\"odel's
completeness theorem, there exists a proof of their equivalence.
Finding the proof is undecidable (it is recursively enumerable, r.e.):
our system does not search for the proof, instead the user has to find
it, and our system will verify it.  Contrast this with a system whose
semantics is based on finite relations: such a system cannot have a
complete proof system for SQL query equivalence.  Indeed, if such a
complete proof system existed, then SQL query equivalence would be
r.e. (since we can enumerate all proofs and search for a proof of
$Q_1=Q_2$), and therefore equivalence would be decidable (since it is
also co-r.e., because we can enumerate all finite relations, searching
for an input s.t. $Q_1 \neq Q_2$).  However, by Trakthenbrot's, query
equivalence is undecidable.  Recall that Trakthenbrot's
theorem~\cite{Trakhtenbrot50DANUSSR,DBLP:books/sp/Libkin04} states
that the problem {\em given an FO sentence $\varphi$, check if
  $\varphi$ has a finite model} is undecidable.  We can reduce this
problem to query equivalence by defining $Q_1$ to be a query that
checks checks $\varphi$ and returns the empty set if $\varphi$ is
false, or returns some non-empty set if $\varphi$ is true, and
defining $Q_2$ to be the query that always returns the empty set (as
above), then checking $Q_1 \equiv Q_2$).  Thus, by extending our
semantics to infinite relations we guarantee that, whenever two
queries are equivalent, there exists a proof of their equivalence.

%%% Local Variables:
%%% mode: latex
%%% TeX-master: "paper"
%%% End:

\section{Conclusion}
\label{sec:conclusion}

We have described \sys, a system for proving equivalence of SQL
rewrite rules.  In support of \sys, we defined a formal language \sem,
following closely SQL's syntax.  Our semantics extends that of SQL
from finite relations to infinite relations, and uses univalent types
from Homotopy Type Theory to represent and prove equality of cardinal
numbers (finite and infinite).  We have demonstrated the power and
flexibility of \sys by proving the correctness of several powerful
optimization rules found in the database literature.

%%% Local Variables:
%%% mode: latex
%%% TeX-master: "paper"
%%% End:

\newpage

%\appendix
%\section{Appendix Title}

%This is the text of the appendix, if you need.

% We recommend abbrvnat bibliography style.

\bibliographystyle{abbrvnat}
\bibliography{paper}
% The bibliography should be embedded for final submission.

% \begin{thebibliography}{}
% \softraggedright

% \bibitem[Smith et~al.(2009)Smith, Jones]{smith02}
% P. Q. Smith, and X. Y. Jones. ...reference text...

% \end{thebibliography}

\end{document}